\definecolor{SeaGreen}{RGB}{46,139,87}
\definecolor{Navy}{RGB}{0,0,128}
\definecolor{Maroon}{RGB}{128,0,0}
\newtheorem{theorem}{Theorem}
\newtheorem{lemma}{Lemma}[section]
\newtheorem{proposition}{Proposition}
\newtheorem*{acknowledgment*}{Acknowledgment}
\newtheorem{remark}{Remark}[section]
\newcommand{\R}{\mathbb R}
\newcommand{\A}{\mathcal A}
\newcommand{\B}{\mathcal B}
\DeclareMathOperator{\Div}{div}
\DeclareMathOperator{\curl}{curl}
\def\XXint#1#2#3{{\setbox0=\hbox{$#1{#2#3}{\int}$ }
\vcenter{\hbox{$#2#3$ }}\kern-.6\wd0}}
\def\Hg {{\mathcal H}}
\def\Jg {{\mathcal J}}
\newcommand{\LL}{\mathcal L}
\newcommand{\OO}{\mathcal O}
\def\Wg {{\mathcal W}}
\def\Vg {{\mathcal V}}
\def\gf {{\mathfrak g}}
\begin{document}
\bibliographystyle{ieeetr} \date{} \title{Existence of superconducting
  solutions for a reduced {G}inzburg-{L}andau model in the presence of
  strong electric currents.}  \author{Yaniv Almog\thanks{Department of
    Mathematics, Louisiana State University, Baton Rouge, LA 70803,
    USA}\,\,\thanks{Present address: Department of Mathematics, Ort Braude
    College, Carmiel 21610, Israel}, Leonid
  Berlyand\thanks{Department of Mathematics, Pennsylvania State
    University, University Park, PA 16802, USA}, Dmitry
  Golovaty\thanks{Department of Mathematics, The University of Akron,
    Akron, Ohio 44325, USA}, and Itai Shafrir\thanks{Department of
    Mathematics, Technion - Israel Institute of Technology, 32000
    Haifa, Israel}}

\maketitle
\begin{abstract}
  In this work we consider a reduced Ginzburg-Landau model in which
  the magnetic field is neglected and the magnitude of the current density
  is significantly stronger than that considered in a recent work by
  the same authors. We prove the existence of a solution which can be obtained by solving a non-convex minimization
  problem away from the boundary of the domain. Near the boundary, we show that this solution is
  essentially one-dimensional.
\end{abstract}
\section{Introduction}
\label{sec:2}
It is a well-established fact that superconductors are characterized by (i) a complete loss of electrical 
resistivity below a certain critical temperature and (ii) a complete expulsion of the magnetic field from purely superconducting
regions. It is thus possible, in a purely superconducting state,
to run electric current through a superconducting
sample while generating only a vanishingly small voltage
drop.  However, a sufficiently strong
electric current will destroy superconductivity and revert the material to the normal
state---even while it remains below the critical temperature \cite{alhe14}.

In this work, we study the effect of electric current on
superconductivity through the 
time-dependent Ginzburg-Landau model \cite{chhe03,goel68} presented
here in a dimensionless form
   \begin{subequations}
\label{eq:1}
\begin{empheq}[left={\empheqlbrace}]{alignat=2}
  &\frac{\partial u}{\partial t} +   i\phi u = \left(\nabla - iA \right)^{2} u + u 
  \left( 1 - |u|^{2} \right) & \text{ in } & \Omega\times\R_+ \,,\\
\; & \kappa^2\left(\curl{A}\right)^2 + \sigma \left(\frac{\partial A}{\partial t} + 
 \nabla\phi\right)  =   \Im\{\bar{u} \nabla u\}  + |u|^{2}A &
\text{ in } & \Omega\times\R_+\,, \\
 &(i\nabla+A)u\cdot{\bf n}=0 \quad\text{and}\quad -\sigma\left(\frac{\partial A}{\partial t} + 
 \nabla\phi\right) \cdot{\bf n}=J& \text{ on } & \partial\Omega\times\R_+ \,, \\
&u(x_,0)=u_0\quad\text{and}\quad  A(x_,0)=A_0& \text{ in } & \Omega\,, \\
\end{empheq}
\end{subequations}
In the above system of equations, $u$ is the order parameter with
$|u|^2$ representing the number density of superconducting electrons.
When $|u| = 1$, a material is said to be purely superconducting
while it is in the normal state when $u = 0$.  We
denote the magnetic vector potential by $A$---so that the magnetic
field is given by $h=\curl A=\partial_1A_2-\partial_2A_1$---and by $\phi$ the
electric scalar potential. The constants $\kappa$ and $\sigma$ are the
Ginzburg-Landau parameter and normal conductivity, of the
superconducting material, respectively, and the quantity
$-\sigma(A_t+\nabla\phi)$ is the normal current.  All lengths in \eqref{eq:1}
have been scaled with respect to the coherence length $\xi$ that
characterizes spatial variations in $u$.  The domain $\Omega\subset\R^2$ is assumed to be smooth (at least $C^{3,\alpha}$ for some $\alpha>0$), the outward unit normal
vector to $\partial\Omega$ is denoted by ${\bf n}$, and the function 
$J:\partial\Omega\to\R$ represents the normal current entering the sample. Here the boundary current must satisfy $\int_{\partial\Omega}J=0$. Note, that it is possible to
prescribe the electric potential on $\partial\Omega$ instead of the current.

It has been demonstrated in \cite{al12} that, when both the current $J$ and the domain $\Omega$ are fixed, one can formally obtain from \eqref{eq:1} the
following system of equations
\begin{subequations}
\label{eq:2}
\begin{empheq}[left={\empheqlbrace}]{alignat=2}
&  \frac{\partial u}{\partial t} + i\phi u = \Delta u + u\left( 1 - | u|^{2} \right), & \qquad &\text{in } \Omega\times\R_+, \\
&  \sigma\Delta\phi = \nabla\cdot [\Im(\bar{ u} \nabla u)],
& \qquad & \text{in }  \Omega\times\R_+, \\
&  \frac{\partial u}{\partial{\bf n}}=0 \text{ and }-\sigma\frac{\partial\phi}{\partial{\bf n}}=J, &\qquad &\text{on } \partial\Omega\times\R_+,  \\
&  u(x_,0)= u_0, & \qquad &\text{in } \Omega \,,
\end{empheq}
\end{subequations}
when $\kappa\to\infty$. This limit corresponds to the regime when the domain size is much smaller than the 
penetration depth  $\lambda=\kappa\xi$ which characterizes variations in the
magnetic field. Similarly to \eqref{eq:1}, the system \eqref{eq:2} remains invariant
under the transformation 
\begin{displaymath}
u\to e^{i\omega(t)}u \quad ; \quad \phi\to\phi-\frac{\partial\omega}{\partial t} \,.
\end{displaymath}
In what follows, we set
\begin{displaymath}
  \omega=\int_0^t\frac{(|u|^2\phi)_\Omega(\tau)}{(|u|^2)_\Omega(\tau)} \,d\tau \,,
\end{displaymath}
to guarantee that
\begin{equation}
\label{eq:3}
  (|u|^2\phi)_\Omega(t)\equiv0
\end{equation}
for all $t>0$. Here
\begin{equation}
\label{eq:ave}
(f)_\Omega=\int_\Omega f
\end{equation}
is the average of $f:\Omega\to\mathbb R$.

In the present contribution we consider steady-state solutions of
\eqref{eq:2} in the large domain limit. Let then $\Omega_\epsilon$ be obtained
from $\Omega$ via the map
\begin{displaymath}
  x\to\frac{x}{\epsilon} \,,
\end{displaymath}
where $\epsilon$ is a small parameter. 
Let $(u,\phi)$ denote a smooth stationary solution of
\eqref{eq:2} in $\Omega_\epsilon$, which must therefore satisfy 
\begin{subequations}
\label{eq:4}
\begin{empheq}[left={\empheqlbrace}]{alignat=2}
-&\Delta u + i\phi u = u\left( 1 - | u|^{2} \right), & \text{in
} \Omega_\epsilon \,, \\
&  \sigma\Delta\phi = \nabla\cdot [\Im(\bar{ u} \nabla u)], &  \text{in }  \Omega_\epsilon\,, \\
&  \frac{\partial u}{\partial{\bf n}}=0,\ -\sigma\frac{\partial\phi}{\partial{\bf n}}=J, \text{ and
}\int_{\partial\Omega}J=0, & \text{\ \ \ \ on } \partial\Omega_\epsilon\,.
\end{empheq}
\end{subequations}
Applying the transformation 
\begin{equation}
\label{eq:5}
 x^1=\epsilon x , \;\;  J^1= \frac{J}{\epsilon}  ,\;\; \phi^1=\frac{\phi}{\epsilon^2} 
  ,\;\; \sigma^1=\sigma\epsilon^2 \,,
\end{equation}
to \eqref{eq:4} and dropping the superscript for notational convenience yields 
\begin{subequations}
\label{eq:6}
\begin{empheq}[left={\empheqlbrace}]{alignat=2}
-&\Delta u + i\phi u = \frac{u}{\epsilon^2}\left( 1 - | u|^{2} \right), & \text{in } \Omega\,, \\
&  \sigma\Delta\phi = \nabla\cdot [\Im(\bar{ u} \nabla u)], &  \text{in }  \Omega\,, \\
&  \frac{\partial u}{\partial{\bf n}}=0,\ -\sigma\frac{\partial\phi}{\partial{\bf n}}=J, \text{ and }\int_{\partial\Omega}J=0, & \text{\ \ \ \ on } \partial\Omega\,.
\end{empheq}
\end{subequations}
In view of \eqref{eq:5} we set \[\sigma=\sigma_0\epsilon^2\] in the sequel, but we will continue
using $\sigma$ to simplify the notation whenever there is no ambiguity. 

Setting $u=\rho e^{i\chi}$, leads to the following problem
\begin{subequations}
\label{eq:7}
\begin{empheq}[left={\empheqlbrace}]{alignat=2}
-&\Delta\rho+\rho|\nabla\chi|^2 = \frac{\rho}{\epsilon^2}(1 - \rho^2), & \text{in } \Omega\,, \\
&\Div(\rho^2\nabla\chi)=\rho^2\phi, & \text{in } \Omega\,, \\
&  \sigma\Delta\phi = \Div(\rho^2\nabla\chi),  &  \text{in }  \Omega\,, \\
&  \frac{\partial\rho}{\partial{\bf n}}=\frac{\partial\chi}{\partial{\bf n}}=0,\ -\sigma\frac{\partial\phi}{\partial{\bf n}}=J,\text{ and }\int_{\partial\Omega}J=0, & \text{\ \ \ \ on } \partial\Omega\,.
\end{empheq}
\end{subequations}
Note that \eqref{eq:7} is invariant with respect to the transformation
$\chi\to\chi+C$ for any constant $C$. We set $(\chi)_\Omega=0$ in order to eliminate
this degree of freedom throughout the paper.

The systems \eqref{eq:2}, \eqref{eq:6}, and \eqref{eq:7} have
attracted significant interest among both physicists and
mathematicians with 
\cite{al12}--\nocite{ivko84,ivetal82,doetal98,voetal03,kabe14,ruetal07,rust08,ruetal10}\cite{ruetal10b} addressing a variety of related problems in a one-dimensional setting for a variety boundary
conditions and \cite{al08} that considers the linearized version of
\eqref{eq:2} in higher dimensions.  A different simplification of
\eqref{eq:1} was derived in \cite{dugr96} for the same limit
$\kappa\to\infty$ under an additional assumption that $J$ and $\sigma$ are of order $\OO(\kappa^2)$ (cf.
\cite{duetal10}). In \cite{aletal15} it has been shown that, when
$\|J\|_{H^{3/2}(\partial\Omega)}\ll 1/\epsilon$, the system \eqref{eq:6} possesses a stationary solution
$(u_s,\phi_s)$ satisfying
  \begin{displaymath}
    \|1-|u_s|\,\|_{2,2} \leq C\epsilon^2\|J\|_{H^{3/2}(\partial\Omega)}^2 \,.
  \end{displaymath}
Furthermore, it is established in \cite{aletal15} that $(u_s,\phi_s)$ is
linearly stable. 

In the present paper we focus on steady state solutions of
\eqref{eq:6} when $\|J\|_\infty\sim\OO(1/\epsilon)$ (or, equivalently, solutions of
\eqref{eq:4} for $\|J\|_\infty\sim\OO(1)$). Far away from the boundaries we
approximate the solution $(u_s,\phi_s)$ by the following ``outer" ansatz 
\[\phi_s\approx0,\qquad |u_s|^2\approx1-|\nabla\zeta|^2,\qquad \chi_s\approx\zeta/\epsilon,\]
where $\zeta$ solves 
\begin{equation}
\label{eq:8}
  \begin{cases}
  \Div \left([1-|\nabla\zeta|^2]\nabla\zeta \right) =0, & \text{in } \Omega, \\
  [1-|\nabla\zeta|^2]\,\frac{\partial\zeta}{\partial{\bf n}} = j\mbox{ and }\int_{\partial\Omega}j=0, & \text{on } \partial\Omega\,,
  \end{cases}
\end{equation}
and $j=\epsilon J$. 

The following existence/uniqueness result holds for the system \eqref{eq:8}. 
\begin{proposition}
\label{prop:exist}
For any $0<\alpha<1$ and $k\in\mathbb N$, let
\begin{displaymath}
    \Wg_{k,\alpha} := \left\{ u\in C^{k,\alpha}(\bar{\Omega})\,\Big| \int_\Omega u\,dx=0\,\right\}.
  \end{displaymath}
Suppose that $j=\mu j_r$ with $j_r\in
C^{k,\alpha}(\partial\Omega)$ satisfying $\int_{\partial\Omega}j_r=0$ and $\|j_r\|_\infty =1$. Then, for a sufficiently small $\mu>0$ and any $k\geq3$, there exists a solution
$\zeta_\mu\in\Wg_{k,\alpha}$ of \eqref{eq:8}. If in addition
\begin{equation}
\label{eq:9}
\|\nabla\zeta_\mu\|_\infty< \frac{1}{\sqrt{3}} \,,
\end{equation}
then
\begin{displaymath}
  \|\nabla\zeta_\mu\|_\infty=\|\nabla\zeta_\mu\|_{L^\infty(\partial\Omega)}
\end{displaymath}
and there exists at most one solution of \eqref{eq:8} satisfying
\eqref{eq:9}. 
\end{proposition}

Once the existence of the ``outer'' solution has been demonstrated, we
establish the existence of a boundary layer approximation. Suppose that $\partial\Omega$ is parametrized by $s$ and let
$t=d(x,\partial\Omega)$, where $d(x,\partial\Omega)$ is the distance from $x$ to $\partial\Omega$. We set $\tau=t/\epsilon$ and define \[\rho_i(s,\tau)=\left|u\left(x(s,\tau)\right)\right|,\qquad\varphi_i(s,\tau)=\epsilon^2\phi\left(x(s,\tau)\right).\]  Let $(\rho_{i0},\varphi_{i0})$
denote the formal limit of $(\rho_i,\varphi_i)$ as $\epsilon\to0$ near the boundary. 
The leading order boundary layer term is given by the solution of
\begin{equation}
\label{eq:10}
  \begin{cases}
-\rho^{\prime\prime}_{i0} - \Big(\rho_r^2-\frac{(\sigma_0\varphi_{i0}^\prime-j)^2}{\rho_{i0}^4}-\rho_{i0}^2\Big)\rho_{i0} = 0 & \text{in } \R_+, \\
   -\sigma_0\varphi_{i0}^{\prime\prime}+ \rho_{i0}^2\varphi_{i0} = 0 & \text{in } \R_+, \\
    \rho_{i0}^\prime(0)=0, & \\
     \varphi_{i0}^\prime(0)= \frac{j}{\sigma_0}\,,
  \end{cases}
\end{equation}
where the derivatives are taken in $\tau$ and $\rho_r^2(s) = 1-\left|\partial\zeta/\partial s(s,0)\right|^2$. We prove the following lemma.
\begin{proposition}
\label{lem:2.2}
Suppose that (\ref{eq:9}) holds. For a sufficiently large $\sigma_0$,
there exists a solution $(\rho_{i0},\varphi_{i0})$ of \eqref{eq:10} such
that $\rho_{i0}\geq\left(1-\left|\nabla\zeta(\cdot,0)\right|^2\right)^{1/2}$ and $(\rho_{i0}-1+|\nabla\zeta(\cdot,0)|^2,\varphi_{i0})\in
H^1(\R_+,\R^2)$.  Furthermore, for some positive $\gamma$ and $C$, we have that
  \begin{equation}
\label{eq:38}
  |\varphi_{i0}|+|\varphi_{i0}^\prime|+
  \big|\rho_{i0}^2-1+|\nabla\zeta(\cdot,0)|^2\big|^{1/2}+|\rho_{i0}^\prime| <Ce^{-\gamma \sigma_0^{-1/2}\tau}. \\
    \end{equation}
\end{proposition}
We also demonstrate that, when $\sigma_0\to\infty$, a good approximation for
$(\rho_{i0},\varphi_{i0})$ is given by the solution of
\begin{displaymath}
  \begin{cases}
\rho_r^2-\frac{(\sigma_0\varphi_{i0}^\prime-j)^2}{\rho_{i0}^4}-\rho_{i0}^2 = 0 & \text{in } \R_+, \\
   -\sigma_0\varphi_{i0}^{\prime\prime}+ \rho_{i0}^2\varphi_{i0} = 0 & \text{in } \R_+, \\
    \rho_{i0}^\prime(0)=0, & \\
     \varphi_{i0}^\prime(0)= \frac{j}{\sigma_0}\,. 
  \end{cases}
\end{displaymath}

Finally, we combine the solutions of (\ref{eq:8}) and (\ref{eq:10}) to
obtain a uniform approximation of $(\rho,\chi,\phi)$ that we denote by
$(\rho_0,\chi_0,\phi_0)$ (cf. \eqref{eq:97}). This approximation satisfies
$\rho_0\approx 
\left(1-|\nabla\zeta|^2\right)^{1/2}$ for $d(x,\partial\Omega)\gg\epsilon$ and $\rho_0\approx\rho_{i0}$ when
$d(x,\partial\Omega)\sim\OO(\epsilon)$. We establish
existence of a solution for the system (\ref{eq:7}) in the following
\begin{theorem}
\label{thm:stationary2}
Suppose that for a given $j\in C^{3,\alpha}(\partial\Omega)$ the problem \eqref{eq:8} has
a solution satisfying $\|\nabla\zeta\|_\infty^2<5/14-\sqrt{65}/70$. Then, for 
sufficiently large $\sigma_0$ and  $s<1$ there exist positive $\epsilon_0(\Omega,j,\sigma_0,s)$ and
$C(\Omega,j,\sigma_0,s)$, such that (\ref{eq:7}) possesses a solution satisfying
 \begin{equation}
\label{eq:11}
\|\rho-\rho_0\|_2 + \|\chi-\chi_0\|_{1,2}
+ \|\phi-\phi_0\|_2  \leq
C\epsilon^{2+s/2} \,,
  \end{equation}
and
\begin{equation}
\label{eq:12}
 \|\rho-\rho_0\|_{1,2} + \|\chi-\chi_0\|_{2,2}
+ \|\phi-\phi_0\|_{1,2}+  \epsilon( \|\rho-\rho_0\|_{2,2} +  \|\phi-\phi_0\|_{2,2}) \leq
 C\epsilon^{1+s/2},
\end{equation}
for all $\epsilon<\epsilon_0$.
\end{theorem}

Note that Theorem \ref{thm:stationary2} holds even when
$\|j\|_{C^{3,\alpha}(\partial\Omega)}$ is not necessarily small, as long as there exists a solution for
\eqref{eq:8} which satisfies  $\|\nabla\zeta\|_\infty^2<5/14-\sqrt{65}/70$. However, because $5/14-\sqrt{65}/70<1/3$, Theorem
\ref{thm:stationary2} is not optimal. We elaborate on this point further
in Section 4.

The two-dimensional analysis reveals a much richer and
more complex picture compared to what is observed for the one-dimensional problem. The
physical relevance of the present work is manifested primarily through the study of the "outer" problem \eqref{eq:8}. When a solution to \eqref{eq:8} exists, we can
conclude from Theorem \ref{thm:stationary2} that, for a sufficiently
large $\sigma_0$, a purely superconducting solution exists for \eqref{eq:4}
as well. On the other hand, when the current density on the boundary is
such that \eqref{eq:8} has no solution, it is reasonable to
expect that a purely superconducting solution cannot exist for
\eqref{eq:4} either. If this is indeed the case, then the question of existence of solution for \eqref{eq:4}
is reduced to the much simpler problem \eqref{eq:8}. We note, however,
that typically, superconducting samples are much larger than
the penetration depth. The magnetic field induced by the electric
current would then destroy superconductivity for much weaker current
densities than the critical values we find here (cf. for instance
\cite{aletal17}).

The rest of the paper is organized as follows. In the next section we
consider an outer approximation of solutions of \eqref{eq:7} and in
particular, prove Proposition \ref{prop:exist} and obtain a higher
order term which is necessary in the proof of Theorem
\ref{thm:stationary2}. In Section 3, we consider a matching inner
solution, while Section 4 is devoted to the development of a uniform
approximation and the proof of Theorem \ref{thm:stationary2}.

\section{Outer approximation}
\label{sec:4}

Suppose that $\|J\|_{C^{3,\alpha}(\partial\Omega)}\epsilon\sim\OO(1)$ for some $\alpha>0$ and
$\sigma=\sigma_0\epsilon^2$ for some $\sigma_0>0$. The solution of the problem
\eqref{eq:7} will be obtained by "gluing" an outer and an inner
approximations, valid away from and close to the boundary,
respectively. In this section we construct the outer approximation.

Combining (\ref{eq:7}b-c), we note that $\phi$ satisfies
\[\sigma_0\epsilon^2\Delta\phi=\rho^2\phi\,.\] We assume that the outer solution---outside
of a $\OO(\epsilon)$-thin inner layer near $\partial\Omega$---corresponds to a
superconducting state in which the magnitude $\rho$ of the order
parameter is bounded away from zero. By following the standard
argument, this implies that $\phi$ is exponentially small in the outer
region so that we can set $\phi_{out}\equiv0$ away from the boundary. This
observation leads to the approximate problem
\begin{equation}
  \label{eq:13}
  \begin{cases}
-\Delta\rho_{out}+\rho_{out}|\nabla\chi_{out}|^2 = \frac{\rho_{out}}{\epsilon^2}\left(1 - \rho_{out}^2\right), & \text{in } \Omega\,, \\
\Div(\rho_{out}^2\nabla\chi_{out})=0, &  \text{in }  \Omega\,, \\
\frac{\partial\rho_{out}}{\partial{\bf n}}=0, \quad \rho_{out}^2\frac{\partial\chi_{out}}{\partial{\bf n}}=-J,\text{ and }\int_{\partial\Omega}J=0, & \text{on }
\partial\Omega \,.
  \end{cases}
\end{equation}
Here the boundary condition reflects the conjecture that the normal
current $-\sigma\nabla\phi$ turns into a superconducting one
$\rho_{out}^2\nabla\chi_{out}$ within a thin one-dimensional boundary layer.
We seek an approximation to the solution of \eqref{eq:13} in the limit
$\epsilon\to0$. 

\subsection{Solution of the outer problem---$\mathcal{O}(1)$-term}
Since $J\sim\OO\left(\frac{1}{\epsilon}\right)$ and
$\rho_{out}\sim\OO(1)$ is bounded away from $0$, we have that
$\nabla\chi_{out}\sim\OO(1/\epsilon).$ Using the first equation in \eqref{eq:13}, we
obtain that to leading order
\begin{equation}
\label{eq:14}
  |\nabla\chi_{out,0}|^2 = \frac{1}{\epsilon^2}\left(1 - \rho_{out,0}^2\right)\,,
\end{equation}
hence
\begin{equation}
\label{eq:6.1}
  \begin{cases}
  \Div \left([1-\epsilon^2|\nabla\chi_{out,0}|^2]\nabla\chi_{out,0} \right) =0, & \text{in } \Omega, \\
  [1-\epsilon^2|\nabla\chi_{out,0}|^2]\,\frac{\partial\chi_{out,0}}{\partial{\bf n}} = -J\text{ and }\int_{\partial\Omega}J=0, & \text{on } \partial\Omega\,.
  \end{cases}
\end{equation}
Rescaling
\begin{displaymath}
  \zeta= \epsilon\chi_{out,0},\ \ \  \quad j=-\epsilon J,
\end{displaymath}
yields
  \begin{equation}
  \label{eq:15}
  \begin{cases}
  \Div \left([1-|\nabla\zeta|^2]\nabla\zeta \right) =0, & \text{in } \Omega, \\
  [1-|\nabla\zeta|^2]\,\frac{\partial\zeta}{\partial{\bf n}} = j\text{ and }\int_{\partial\Omega}j=0, & \text{on } \partial\Omega\,.
  \end{cases}
\end{equation}
We seek smooth (at least, $H^3(\Omega)$) solutions of \eqref{eq:15} satisfying 
\begin{equation}
\label{icon}
\|\nabla\zeta\|_\infty\leq 1.
\end{equation}
 The latter inequality is needed to guarantee that $\rho_{out,0}$ in \eqref{eq:14} remains meaningful. In fact, as will be discussed below, a stronger bound on $\|\nabla\zeta\|_\infty$ will be required to establish existence of solutions of \eqref{eq:15}.

First, we observe that the inequality constraint \eqref{icon} and the elementary statement  
\[
\max_{t\in[0,1]}\left(t-t^3\right)=\frac{2}{3\sqrt{3}},
\]
result in the bound 
\begin{equation}
\label{scur}
\|\mathbf{j}_s\|_{\infty}\leq\frac{2}{3\sqrt{3}}\,,
\end{equation}
on the superconducting current $$\mathbf{j}_s:=\left[1-|\nabla\zeta|^2\right]\nabla\zeta.$$ A necessary condition for existence of solutions of \eqref{eq:15} then follows from the requirement that the boundary data should satisfy \eqref{scur}, that is $\|j\|_{L^\infty(\partial\Omega)}\leq\frac{2}{3\sqrt{3}}$. In fact, even stronger necessary condition can be established as we will demonstrate in the next proposition. 

Let the
distance between any two points $x,y\in\Omega$ be defined as
\begin{displaymath}
  d(x,y) = \inf_{\gamma\subset\bar{\Omega}}\mathbf{L}(\gamma) \,,
\end{displaymath}
where $\gamma$ is a continuous path connecting $x$ and $y$ and $\mathbf{L}(\gamma)$ is the length of $\gamma$. If $\Omega$ is
convex, then $d(x,y)=|x-y|$. For any $x,y\in\partial\Omega$, let
\begin{displaymath}
  M(x,y) = \frac{1}{d(x,y)}\left|\int_{\Gamma}j\,ds\right|,
\end{displaymath}
where $\Gamma$ is either of the two possible paths in $\partial\Omega$ connecting $x$ and $y$ (note that the value of $M$ is independent of the choice of the path due to the condition $\int_{\partial\Omega}j=0$). 
\begin{proposition}
  \label{prop:nonexist}
Suppose that the boundary value problem \eqref{eq:15} has a solution in $H^3(\Omega)$. Then
\begin{equation}
\label{eq:16}
\sup_{x,y\in\partial\Omega} M(x,y)
\leq\frac{2}{3\sqrt{3}}.
\end{equation}
\end{proposition}
\begin{proof}
  Given $x,y\in\partial\Omega$, let $\tilde\Gamma$ denote a shortest path in $\bar\Omega$
  connecting $y$ to $x$.  Suppose that a solution $\zeta\in H^3(\Omega)$ of
  \eqref{eq:15} exists. Let $\Omega_\Gamma\subset\Omega$ denote the domain enclosed in
  $\Gamma\cup\tilde{\Gamma}$ and ${\bf n}$ denote the outward unit normal on
  $\partial\Omega_\Gamma$.   Clearly,
\begin{equation}
\label{eq:17}
  \int_{\Gamma}j\,ds + \int_{\tilde\Gamma}\left[1-|\nabla\zeta|^2\right]\frac{\partial\zeta}{\partial{\bf n}} \,ds = 0
\end{equation}
and the inequality 
\begin{displaymath}
   \left|\int_{\tilde\Gamma}\left[1-|\nabla\zeta|^2\right]\frac{\partial\zeta}{\partial{\bf n}} \,ds \right| \leq
   \|\mathbf{j}_s\|_\infty d(x,y) \leq
   \frac{2}{3\sqrt{3}} d(x,y)
\end{displaymath}
holds by \eqref{scur}. Substituing this expression into \eqref{eq:17} yields
\begin{displaymath}
  M(x,y)\leq \frac{2}{3\sqrt{3}}
\end{displaymath}
and \eqref{eq:16} follows because $x,y\in\partial\Omega$ were chosen arbitrarily.
\end{proof} 

\begin{figure}[htb]
  \centering
 \includegraphics[height=1.5in]{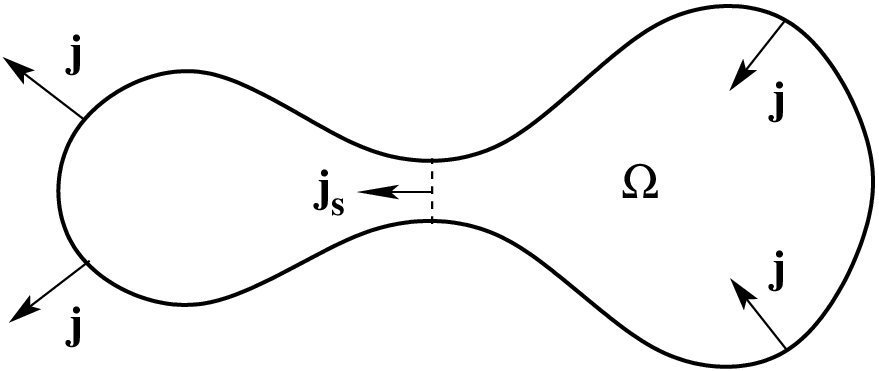}
\caption{Global constraint on the boundary data---the superconducting current through the narrowing in $\Omega$ has to satisfy the pointwise bound $|\mathbf{j}_s|\leq\frac{2}{3\sqrt{3}}$.}
  \label{fig:1}
\end{figure}

\begin{remark}
{\em It can be easily verified that a solution $\zeta= jx$ of \eqref{eq:15} obtained in \cite{ivko84} when $\Omega=\mathbb{R}$, exists 
only if the current at infinity satisfies $|j|\leq2/(3\sqrt{3})$. Proposition \ref{prop:nonexist} establishes that the geometry of the domain $\Omega$ places additional restrictions on the boundary data in \eqref{eq:15} when $\Omega\subset\mathbb{R}^2$, as illustrated in Figure \ref{fig:1}.}
\end{remark}

In order to prove existence of solutions to \eqref{eq:15} we will need the following auxiliary lemma
\begin{lemma}
  \label{lem:max}
Let $\zeta$ denote a $C^3(\Omega)$ solution of \eqref{eq:15}. If
$\|\nabla\zeta\|_\infty<1/\sqrt{3}$, then $\|\nabla\zeta\|_\infty = \|\nabla\zeta\|_{L^\infty(\partial\Omega)}$.
\end{lemma}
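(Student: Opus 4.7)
The plan is to derive a linear elliptic inequality for $w:=|\nabla\zeta|^2$ and apply the weak maximum principle. Expanding the divergence in the PDE of \eqref{eq:7} gives
\[
(1-w)\Delta\zeta \;=\; \nabla\zeta\cdot\nabla w.
\]
I would then differentiate this identity with respect to $x_m$, contract with $\zeta_m$, and sum on $m$, using the Bochner-type identities
\[
\zeta_m\zeta_{km} \;=\; \tfrac{1}{2}w_k, \qquad \zeta_m(\Delta\zeta)_m \;=\; \tfrac{1}{2}\Delta w - |D^2\zeta|^2,
\]
together with a second application of the equation to replace the resulting $(\nabla\zeta\cdot\nabla w)\,\Delta\zeta$ by $(1-w)(\Delta\zeta)^2$. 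The anticipated outcome is the pointwise identity
\[
\frac{1-w}{2}\,\Delta w - \zeta_k\zeta_m w_{km} \;=\; \frac{1}{2}|\nabla w|^2 + (1-w)\bigl[(\Delta\zeta)^2+|D^2\zeta|^2\bigr].
\]

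The left-hand side is $A_{km}\,w_{km}$, where
\[
A_{km} \;=\; \frac{1-w}{2}\,\delta_{km} - \zeta_k\zeta_m.
\]
Diagonalising $A$ in an orthonormal frame whose first vector is aligned with $\nabla\zeta$ gives eigenvalues $\frac{1-3w}{2}$ (parallel to $\nabla\zeta$) and $\frac{1-w}{2}$ (orthogonal to it). Under the hypothesis $\|\nabla\zeta\|_\infty^2<1/3$, both eigenvalues are strictly positive and uniformly bounded away from zero on $\bar\Omega$, so $A_{km}\,\partial_k\partial_m$ is a uniformly elliptic linear operator whose coefficients are continuous. Since $w<1/3<1$, the right-hand side of the identity above is a sum of non-negative terms, so $w\in C^2(\Omega)\cap C(\bar\Omega)$ is a subsolution of $A_{km}(x)\,u_{km}=0$. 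The weak maximum principle for linear elliptic operators in non-divergence form then yields $\max_{\bar\Omega}w=\max_{\partial\Omega}w$, which is exactly the claim $\|\nabla\zeta\|_\infty=\|\nabla\zeta\|_{L^\infty(\partial\Omega)}$.

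The only delicate step is the index bookkeeping leading to the displayed identity; the fact that the remainder on the right-hand side is a manifest sum of squares hinges on substituting the PDE a second time to eliminate $\nabla\zeta\cdot\nabla w$ in favour of $(1-w)\Delta\zeta$. Everything else is routine: $\zeta\in C^3$ gives $w\in C^2$ and makes the Bochner computation justified, and continuity of $\nabla\zeta$ up to $\partial\Omega$ (which holds since $\zeta\in H^3(\Omega)$ in the setting of \rprop{prop:exist}) lets the maximum principle deliver the boundary conclusion. Finally, the threshold $1/\sqrt{3}$ in the hypothesis is sharp for this method: it is precisely where the smaller eigenvalue $\frac{1-3w}{2}$ degenerates, mirroring the role of $1/\sqrt{3}$ as the maximiser of $t\mapsto t(1-t^2)$ in \eqref{scur} and \rprop{prop:nonexist}.
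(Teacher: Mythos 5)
Your proof is correct and follows essentially the same route as the paper: expand the divergence, apply the Bochner identity to $w=|\nabla\zeta|^2$, and apply the maximum principle to the operator built from the matrix $(1-w)\,\mathrm{I}-2\nabla\zeta\otimes\nabla\zeta$, whose eigenvalues $1-w$ and $1-3w$ give exactly the $1/\sqrt{3}$ threshold. The only (harmless) difference is bookkeeping: the paper retains a first-order drift $b\cdot\nabla$ in its operator $\LL$ and concludes $\LL w\ge 0$, whereas you substitute the equation a second time to absorb that drift, so the right-hand side becomes a manifest sum of squares and the weak maximum principle applies to the pure second-order operator $A_{km}\partial_k\partial_m$.
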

\begin{proof}
By expanding the left hand side of \eqref{eq:15} we obtain
\[\left(1-|\nabla\zeta|^2\right)\Delta\zeta-\nabla\left(|\nabla\zeta|^2\right)\cdot\nabla\zeta=0,\]
then taking the gradient of both sides in this expression, in turn, gives
\[\left(1-|\nabla\zeta|^2\right)\nabla\Delta\zeta-\Delta\zeta\nabla\left(|\nabla\zeta|^2\right)-\nabla\nabla\left(|\nabla\zeta|^2\right)\nabla\zeta-\nabla\nabla\zeta\nabla\left(|\nabla\zeta|^2\right)=0.\]
Solving for $\nabla\Delta\zeta$ and taking dot product with $\nabla\zeta$, we have
\begin{multline}
\label{greq}
  \nabla\zeta\cdot\nabla\Delta\zeta = \frac{1}{\left(1-|\nabla\zeta|^2\right)}\left(\Delta\zeta \,\nabla\zeta\cdot\nabla(|\nabla\zeta|^2) \right. \\ \left. +
  \frac{1}{2}|\nabla(|\nabla\zeta|^2)|^2 + \nabla\zeta\cdot \nabla\nabla(|\nabla\zeta|^2)\nabla\zeta\right).
\end{multline}
Now let
\begin{displaymath}
  \LL  = A(\nabla\zeta):\nabla\nabla - b\cdot\nabla \,, 
\end{displaymath}
where $A:\mathbb{R}^2\to M^{2\times2}$  and $b\in C^1(\Omega,\R^2)$ are given by
\begin{equation}
\label{eq:18}
  A(z)= \left(1-|z|^2\right)\mathrm{I} - 2\,z\otimes z
\end{equation}
and 
\begin{displaymath}
  b= \nabla(|\nabla\zeta|^2) + 2\,\Delta\zeta\,\nabla\zeta\,, 
\end{displaymath}
respectively. Observing that
  \begin{equation}
  \label{eqlap}
    \Delta(|\nabla u|^2)= 2 |\nabla\nabla u|^2 + 2\nabla u\cdot\nabla\Delta u\,,
  \end{equation}
holds for any $u\in C^3(\Omega)$, substituting \eqref{greq} into \eqref{eqlap}, and rearranging terms we find that
\begin{displaymath}
  \LL |\nabla\zeta|^2 = 2 \left(1-|\nabla\zeta|^2\right)|\nabla\zeta|^2 \geq 0 \,.
\end{displaymath}
The operator $\LL$ is uniformly elliptic when
$|\nabla\zeta|<1/\sqrt{3}$. Indeed, if we let $\nabla\zeta^\perp:=\left(-\zeta_y,\zeta_x\right),$ it
immediately follows from \eqref{eq:18} that $\nabla\zeta^\perp$ and $\nabla\zeta$ are
eigenvectors of $A(\nabla\zeta)$ with eigenvalues $1-|\nabla\zeta|^2$ and $1-3|\nabla\zeta|^2$,
respectively. Both eigenvalues are positive and the matrix $A(\nabla\zeta)$ is
positive definite as long as $|\nabla\zeta|<1/\sqrt{3}$. The proof of the lemma
follows by the maximum principle.  
\end{proof}

We now prove the first existence and uniqueness result for \eqref{eq:15}. For convenience, we repeat here the statement of Proposition \ref{prop:exist}.

\medskip\noindent
{\bf Proposition 1.} {\em 
For any $0<\alpha<1$ and $k\in\mathbb N$, let
\begin{displaymath}
    \Wg_{k,\alpha} := \left\{ u\in C^{k,\alpha}(\bar{\Omega})\,\Big| \int_\Omega u\,dx=0\,\right\}.
  \end{displaymath}
Suppose that $j=\mu j_r$ with $j_r\in
C^{k,\alpha}(\partial\Omega)$ satisfying $\int_{\partial\Omega}j_r=0$ and $\|j_r\|_\infty =1$. Then, for a sufficiently small $\mu>0$ and any $k\geq3$, there exists a solution
$\zeta_\mu\in\Wg_{k,\alpha}$ of \eqref{eq:15}. If in addition
\begin{equation}
\label{eq:9.1}
\|\nabla\zeta_\mu\|_\infty< \frac{1}{\sqrt{3}} \,,
\end{equation}
then
\begin{equation}
\label{eq:9.2}
  \|\nabla\zeta_\mu\|_\infty=\|\nabla\zeta_\mu\|_{L^\infty(\partial\Omega)}
\end{equation}
and there exists at most one solution of \eqref{eq:15} satisfying
\eqref{eq:9.1}.}

\smallskip
\begin{proof} 
We use the implicit function theorem. 
Define $F:\Wg_{k,\alpha}\times\R \to Z$ by
\begin{displaymath}
  F(u,\mu) = \left(-\Div\left(\left[1-|\nabla u|^2\right]\nabla u\right),\left[1-|\nabla u|^2\right]\nabla u\cdot{\bf n}-\mu j_r\right)\,,
\end{displaymath}
where
\begin{displaymath}
  Z=\{(z_1,z_2)\in C^{k-2,\alpha}(\Omega)\times C^{k-1,\alpha}(\partial\Omega)\,;\, \int_\Omega z_1+\int_{\partial\Omega}z_2=0\}\,.
\end{displaymath}

Clearly, $F(0,0)=0$. Furthermore, it can be readily verified that the
Frechet derivative of $F$ with respect to $u$ is  given by
\begin{displaymath}
  D_uF(u,\mu)\omega =  \left(-\Div\left(A(\nabla u)\nabla \omega\right),A(\nabla u)\nabla \omega\cdot{\bf n}\right)\,,
\end{displaymath}
where $A$ is as defined in \eqref{eq:18}. In particular, by standard
elliptic estimates,
\begin{displaymath}
  D_uF(0,0)\omega =\left(-\Delta \omega,\frac{\partial \omega}{\partial{\bf n}}\right) \,,
\end{displaymath}
is an isomorphism of $\Wg_{k,\alpha}$ onto $Z$. It follows that a solution of \eqref{eq:15} with $j=\mu j_r$
exists in some neighborhood of $\mu=0$. We denote this solution by
$\zeta_\mu$. 

By the implicit function theorem, the solution continues to exist
 as long as $D_uF(\zeta_\mu,\mu)$ is an isomorphism from $\Wg_{k,\alpha}$ onto
 $Z$. To show that this is indeed the case---as long as \eqref{eq:9.1}
 holds---we first notice that the map $D_uF(\zeta_\mu,\mu)$ is
 injective. Indeed,  if 
 $D_uF(\zeta_\mu,\mu)\omega=0$ for some $\omega\in\Wg_{k,\alpha}$, i.e.,
  \begin{equation}
    \label{eq:20}
 \left\{
 \begin{aligned}
 -\Div\left(A(\nabla\zeta_\mu)\nabla \omega\right)\omega=0 ~\text{ in }\Omega\,,\\
  A(\nabla\zeta_\mu)\nabla \omega\cdot{\bf n}=0 ~\text{ on }\partial\Omega\,,
   \end{aligned} 
\right.
 \end{equation}
    then multiplying the equation in \eqref{eq:20} by $\omega$ and
    using Green formula yields, 
\begin{displaymath}
  0= \int_\Omega A\left(\nabla \zeta_\mu\right)\nabla\omega\cdot\nabla\omega \,dx \geq \|1-3|\nabla\zeta_\mu|^2\|_\infty \|\nabla \omega\|_2\,,
\end{displaymath}
 implying that $\omega=0$ (under the assumption \eqref{eq:9.1}).
 To prove that $D_uF(\zeta_\mu,\mu)$ is onto $Z$, consider any $(z_1,z_2)\in Z$
 and the problem,
\begin{equation}
    \label{eq:21}
 \left\{
 \begin{aligned}
 -\Div\left(A(\nabla\zeta_\mu)\nabla \omega\right)\omega=z_1 ~\text{ in }\Omega\,,\\
  A(\nabla\zeta_\mu)\nabla \omega\cdot{\bf n}=z_2 ~\text{ on }\partial\Omega\,.
   \end{aligned} 
\right.
 \end{equation}
 Existence for \eqref{eq:21} can be established by using the Schauder
 approach along with the Fredholm alternative (see Theorem 6.30, Theorem 6.31 and the subsequent remark in
 \cite{gitr01}). Alternatively, one can
 prove existence of a weak solution $\omega\in H^1(\Omega)$ to \eqref{eq:21} by
 minimization of the functional
 \begin{displaymath}
   J(u)=\int_\Omega \Big[A\left(\nabla \zeta_\mu\right)\nabla u\cdot\nabla u -z_1u\Big]-\int_{\partial\Omega} z_2u\quad\text{
       over }u\in H^1(\Omega)\text{ with }\int_\Omega u=0\,,
 \end{displaymath}
 and then deduce further regularity for $\omega$ by standard elliptic
 estimates (see \cite{agetal64}).

Once again, it follows by the implicit function theorem that
as long as $\|\nabla\zeta_\mu\|_\infty<1/\sqrt{3}$ the branch $(\zeta_\mu,\mu)$
continuously extends from $(0,0)$. By Lemma \ref{lem:max} the maximum
of $|\nabla\zeta_\mu|$ is attained on the boundary.

To complete the proof we need to demonstrate uniqueness of the above
solutions for every fixed $\mu$. To this end we define the functional
\begin{equation}
  \label{eq:22}
I(\zeta)= - \int_\Omega [\frac{1}{4}(1-|\nabla\zeta|^2)^2 +\nabla_\perp\Phi\cdot\nabla\zeta]\,dx \,, 
\end{equation}
where $\Phi$ is any $C^1$ potential, whose tangential derivative along
$\partial\Omega$ satisfies $\partial\Phi/\partial\tau=\mu j_r$. It can be readily verified that every
critical point of $I$ must satisfy \eqref{eq:15}. Furthermore, $I$ is convex in
\begin{displaymath}
  X = \Big\{ \zeta\in H^1(\Omega)\,\Big| \, \|\nabla\zeta\|_\infty \leq1/\sqrt{3} \;;\;
  \int_\Omega\zeta dx=0\Big\}\,.
\end{displaymath}
and it is strictly convex in the interior of $X$. Thus, for every $\mu$, the functional $I$ can have at most one
critical point, which must therefore lie on the branch of solutions $\zeta_\mu$
whose existence has been established above. 

The proof of \eqref{eq:9.2} has already been established in Lemma \ref{lem:max}.
\end{proof}

\subsection{Solution of the outer problem---$\mathcal{O}\left(\epsilon^2\right)$-term}
Suppose now that there exists a solution for \eqref{eq:15} satisfying 
\begin{equation}
\label{eq:23}
  \|\nabla\zeta\|_\infty<\frac{1}{\sqrt{3}}\,.
\end{equation}
We seek a more accurate estimate of the solution of \eqref{eq:13}
when $\epsilon\ll1$. 
\begin{lemma}
  \label{lem:outer-next-order}
Suppose that a solution of \eqref{eq:15} satisfying \eqref{eq:23}
exists for some $j\in C^{3,\alpha}(\partial\Omega)$ with $0<\alpha<1$. Let $\rho_{out,0}$ and $\chi_{out,0}$ solve
\eqref{eq:14}-\eqref{eq:6.1}. Then there exist
$(\rho_{out,1},\chi_{out,1})\in C^{2,\alpha}(\Omega,\R^2)$, $C>0$, and $\epsilon_0>0$, such that
\begin{displaymath}
  (\rho_o,\chi_o)=(\rho_{out,0},\chi_{out,0})+\epsilon^2(\rho_{out,1},\chi_{out,1}) \,,
\end{displaymath} 
 satisfies
\begin{equation}
  \label{eq:19}
\Big\| \Delta\rho_o +\frac{1}{\epsilon^2}\rho_o(1-\rho_o^2 -\epsilon^2|\nabla\chi_o|^2)\Big\|_\infty+   \|\Div
(\rho_o^2\nabla\chi_o)\|_\infty\leq C\epsilon^2
\end{equation}
for all $0<\epsilon<\epsilon_0$. 
\end{lemma}
\begin{proof}
 Applying a regular perturbation scheme allows us to select
 $(\rho_{out,1},\chi_{out,1})$ as the solution of  
\begin{equation}
\label{eq:2.34}
\left\{
\begin{aligned}
  &\rho_{out,1}(1-3\rho_{out,0}^2-\epsilon^2|\nabla\chi_{out,0}|^2)-
  2\epsilon^2\rho_{out,0}\nabla\chi_{out,0}\cdot\nabla\chi_{out,1} = -\Delta\rho_{out,0} \,, \\
  &\Div\left(\rho_{out,0}^2\nabla\chi_{out,1}+2\rho_{out,0}\rho_{out,1}\nabla\chi_{out,0}\right)=0\,.
  \end{aligned}
  \right.
\end{equation}
Recalling that $\rho_{out,0}^2=1-|\nabla\zeta|^2>2/3$ in $\Omega,$ we obtain with
the aid of \eqref{eq:14} that
\begin{displaymath}
  \rho_{out,1} = \frac{\Delta\rho_{out,0}}{2\rho_{out,0}^2} -
  \frac{\epsilon^2\nabla\chi_{out,0}\cdot\nabla\chi_{out,1}}{\rho_{out,0}} \,.
\end{displaymath}
Substituting this expression into the second equation in
\eqref{eq:2.34} gives the following problem 
\begin{equation}
  \label{eq:24}
  \begin{cases}
    \Div (A(\nabla\zeta)\nabla\zeta_1 ) =  - \Div \Big(\frac{\Delta\rho_{out,0}}{\rho_{out,0}}\nabla\zeta \Big)&
    \text{in } \Omega\,, \\
A(\nabla\zeta)\nabla\zeta_1 \cdot{\bf n} = -\frac{\Delta\rho_{out,0}}{\rho_{out,0}}\nabla\zeta\cdot{\bf n} & \text{on } \partial\Omega \,,
  \end{cases}
\end{equation}
where $A$ is given by \eqref{eq:18} and
\begin{displaymath}
  \zeta_1= \epsilon\chi_{out,1}.
\end{displaymath}
Since $A(\nabla\zeta)\in C^{k-1,\alpha}(\Omega,M(2,2))$ is positive definite by \eqref{eq:23} and the right-hand-side of \eqref{eq:24} is in
$C^{k-4,\alpha}(\Omega)$, we may use Schauder estimates (cf. for instance Theorem 9.3 and
the subsequent Remark 2 in \cite{agetal64}) to conclude the existence of a unique
$C^{k-2,\alpha}$-solution of \eqref{eq:24}. Furthermore, we have that
\begin{equation}
  \label{eq:25}
\|\zeta_1\|_{C^{k-2,\alpha}(\Omega)} \leq C(\Omega,\alpha,k)\,.
\end{equation}

Next, we set 
\begin{equation}
\label{eq:17.5}
  \rho_o = \rho_{out,0}+\epsilon^2\rho_{out,1}, \quad  \chi_o = \chi_{out,0}+\epsilon^2\chi_{out,1} \,.
\end{equation}
With the aid of \eqref{eq:14} and \eqref{eq:2.34} we obtain that
\begin{equation}
\label{eq:26}
  -\Delta\rho_o - \frac{1}{\epsilon^2}\rho_o(1-\rho_o^2 -\epsilon^2|\nabla\chi_o|^2) = g_1,
\end{equation}
where
\begin{multline*}
  g_1 = \epsilon^2\left[-\Delta\rho_{out,1}+
  \rho_{out,0}\left(\rho_{out,1}^2+|\nabla\zeta_1|^2\right) + \right.\\
 \left.\rho_{out,1}\left(2 \rho_{out,0}\rho_{out,1}+2\nabla\zeta\cdot\nabla\zeta_1+\epsilon^2\left(\rho_{out,1}^2+|\nabla\zeta_1|^2\right)\right)\right] \,.
\end{multline*}
If we choose $k\geq3$, then \eqref{eq:25} implies that
\begin{equation}
\label{eq:27}
  \|g_1\|_\infty  \leq C\epsilon^2 \,,
\end{equation}
for some $C>0$. In a similar manner, it also follows that
\begin{equation}
\label{eq:28}
  \Div (\rho_o^2\nabla\chi_o) = g_2,
\end{equation}
where
\begin{equation}
\label{eq:29}
  \|g_2\|_\infty  \leq C\epsilon^2 \,.
\end{equation}
\end{proof}

\section{Inner approximation}
\label{sec:inner-solution}

Next, the outer solution has to be "bridged" to the boundary conditions on
$\partial\Omega$ in \eqref{eq:7} by constructing an appropriate inner solution near the boundary. We
thus consider \eqref{eq:7} in a $\OO(\epsilon)$-thick boundary layer near $\partial\Omega$.  We begin by introducing a curvilinear
coordinate system $(s,t)$ near $\partial\Omega$ by setting
\[(x,y)={\mathbf r}(s)-t\,{\bf n}(s)\] where $t=d(x,\partial\Omega)$ is the
distance function to $\partial\Omega$ that we will assume to be positive in the
interior of $\Omega$. The vector function ${\mathbf r}$ describes $\partial\Omega$
and is parameterized with respect to the arclength $s$ calculated from
some fixed initial point on $\partial\Omega$ in the counterclockwise direction.
The outward unit normal vector ${\bf n}(s)$ to $\partial\Omega$ at ${\mathbf
  r}(s)$ is given by ${\mathbf r}_{ss}(s)=-\kappa(s){\bf n}(s)$, where
$\kappa(s)$ denotes the curvature of $\partial\Omega$ at the point ${\mathbf r}(s)$.
The Jacobian of the transformation $(x,y)\to(s,t)$ is then
\begin{equation}
\label{eq:gf}
  \gf = 1- t\kappa(s) \,.
\end{equation}

Recall that $\sigma=\sigma_0\epsilon^2$. Then, after rescaling
\begin{equation}
\label{eq:30}
 \tau=\frac{t}{\epsilon}, \quad j=-\epsilon J,\quad \varphi=\epsilon^2\phi\,,
\end{equation}
and rewriting the system \eqref{eq:7} in terms of $(s,t)$, we have
\begin{subequations}
\label{eq:31}
\begin{empheq}[left={\empheqlbrace}]{alignat=2}
  &    -\frac{\partial^2\rho}{\partial\tau^2}-\Big(1-\Big|\frac{\partial\upsilon}{\partial\tau}+\frac{\partial\zeta}{\partial t}(s,\epsilon\tau)\Big|^2-\rho^2\Big)\rho =
      &\notag \\ & \qquad
-\epsilon\frac{\kappa}{\gf}\frac{\partial\rho}{\partial\tau}+ \epsilon^2
\Big(\frac{1}{\gf}\frac{\partial}{\partial s}\Big)^2\rho -\epsilon^2\left|\frac{1}{\gf}\left(\frac{\partial\upsilon}{\partial
  s}+\frac{1}{\epsilon}\frac{\partial\zeta}{\partial s}(s,\epsilon\tau)\right)\right|^2 \rho    & \text{ in
} \Omega \\ 
&   -\sigma_0\frac{\partial^2\varphi}{\partial\tau^2} + \rho^2\varphi = -\sigma_0\epsilon\frac{\kappa}{\gf}\frac{\partial\varphi}{\partial\tau}+
   \sigma_0\epsilon^2\Big(\frac{1}{\gf}\frac{\partial}{\partial s}\Big)^2\varphi  & \text{ in } \Omega \\
&    \frac{\partial}{\partial\tau}\left(\rho^2\left(\frac{\partial\upsilon}{\partial\tau}+\frac{\partial\zeta}{\partial t}(s,\epsilon\tau)\right)\right) -\sigma_0\frac{\partial^2\varphi}{\partial\tau^2} =-\sigma_0\epsilon\frac{\kappa}{\gf}\frac{\partial\varphi}{\partial\tau}+\sigma_0\epsilon^2\Big(\frac{1}{\gf}\frac{\partial}{\partial s}\Big)^2\varphi &
\notag \\ 
& \qquad + \epsilon\rho^2\frac{\kappa}{\gf}\left(\frac{\partial\upsilon}{\partial\tau}+\frac{\partial\zeta}{\partial t}(s,\epsilon\tau)\right)
-\epsilon^2\frac{1}{\gf}\frac{\partial}{\partial s}\left(\rho^2\frac{1}{\gf}\left(\frac{\partial\upsilon}{\partial s}+\frac{1}{\epsilon}\frac{\partial\zeta}{\partial s}(s,\epsilon\tau)\right)\right)  & \text{ in } \Omega \\
&    \frac{\partial\rho}{\partial\tau} = 0,\ \frac{\partial\upsilon}{\partial\tau} = \frac{\partial\zeta}{\partial{\bf n}},\  \frac{\partial\varphi}{\partial\tau} = \frac{j}{\sigma_0},\mbox{ and }\int_{\partial\Omega}j=0  & \text{on }  \partial\Omega  \\
 &   \int_\Omega  \upsilon \,dx =0 \,,&
  \end{empheq}
\end{subequations}
where
\begin{equation}
\label{eq:32}
\upsilon(s,\tau)=\chi(s,\tau)-\frac{1}{\epsilon}\zeta(s,\epsilon\tau),
\end{equation}
and (\ref{eq:31}b) is obtained by combining (\ref{eq:7}b) and (\ref{eq:7}c). Note that the second boundary condition in (\ref{eq:31}d) can be written as
\[\frac{\partial\upsilon}{\partial\tau} = \frac{j}{1-|\nabla\zeta|^2}\mbox{ on }\partial\Omega, \]
by taking into account \eqref{eq:15}.

\subsection{Solution of the inner problem---$\mathcal{O}\left(1\right)$-term}
We will attempt to obtain the inner solution
for \eqref{eq:31} through a one-dimensional approximation in terms of
the variable $\tau$ in the direction transverse to $\partial\Omega$.  We thus seek
a solution, denoted by  $(\rho_{i0},\varphi_{i0},\upsilon_{i0})$, for the following problem 
\begin{subequations}
\label{eq:33}
\begin{empheq}[left={\empheqlbrace}]{alignat=2}
&-\rho^{\prime\prime}_{i0} -\left(\rho_r^2(s)-\left|\upsilon_{i0}^\prime+\frac{\partial\zeta}{\partial t}(s,0)\right|^2-\rho_{i0}^2\right)\rho_{i0} = 0 & \ \text{ in } \R_+ \\
&  -\sigma_0\varphi_{i0}^{\prime\prime}+ \rho_{i0}^2\varphi_{i0} = 0 & \text{in } \R_+ \\
&  \left(\rho_{i0}^2\left(\upsilon_{i0}^\prime+\frac{\partial\zeta}{\partial t}(s,0)\right)\right)^\prime - \rho_{i0}^2\varphi_{i0}=0   & \text{in } \R_+ \\
&    \rho_{i0}^\prime(0)=0 & \\
 &    \varphi_{i0}^\prime(0)= \frac{j(s)}{\sigma_0} & \\
 &   \upsilon^\prime_{i0}(0) = -\frac{\partial\zeta}{\partial t}(s,0)\,, & 
  \end{empheq}
\end{subequations}
where
\begin{equation}
\label{eq:rhors}
  \rho_r^2(s) = 1-\Big|\frac{\partial\zeta}{\partial s}(s,0)\Big|^2\,.
\end{equation}
In what follows, we drop the dependence on $s$ for notational
simplicity.   

Adding the second and the third equations in \eqref{eq:33} and
integrating, we find that $\upsilon_{i0}$ can be determined up to a constant
by solving  
\begin{equation}
\label{eq:34}
  \upsilon_{i0}^\prime=\frac{\sigma_0\varphi_{i0}^\prime-j}{\rho_{i0}^2}-\frac{\partial \zeta}{\partial t}(s,0).
\end{equation}
Then $(\rho_{i0},\varphi_{i0})$ satisfy
\begin{equation}
\label{eq:35}
  \begin{cases}
-\rho^{\prime\prime}_{i0} - \Big(\rho_r^2-\frac{(\sigma_0\varphi_{i0}^\prime-j)^2}{\rho_{i0}^4}-\rho_{i0}^2\Big)\rho_{i0} = 0 & \text{in } \R_+ \\
   -\sigma_0\varphi_{i0}^{\prime\prime}+ \rho_{i0}^2\varphi_{i0} = 0 & \text{in } \R_+ \\
    \rho_{i0}^\prime(0)=0 & \\
     \varphi_{i0}^\prime(0)= \frac{j}{\sigma_0}\,. 
  \end{cases}
\end{equation}

As $\tau\to\infty$
we expect that $\left(\varphi_{i0}^\prime,\rho_{i0}^\prime\right)\to(0,0)$ and hence $\lim_{\tau\to\infty}\rho_{i0}=\rho_j,$ where $\rho_j$ solves
\begin{equation}
\label{eq:36}
  \rho_r^2-\frac{j^2}{\rho_j^4}-\rho_j^2  =0 \,.
\end{equation}
It is an easy exercise to show that positive solutions of \eqref{eq:36} exist
as long as
\begin{equation}
  \label{eq:37}
j^2\leq \frac{4}{27}\rho_r^6\,.
\end{equation}
On the other hand, since $j^2 = (1-|\nabla\zeta|^2)^2|\partial\zeta/\partial{\bf n}|^2$ on $\partial\Omega$, it follows that 
\begin{equation}
\label{eq:29.5}
\rho_j^2=\left.(1-|\nabla\zeta|^2)\right|_{\partial\Omega}.
\end{equation} 
\begin{remark}
\label{rmk3}
By \eqref{eq:15} and \eqref{eq:23} we have the following relationship
\begin{displaymath}
  j^2 = (1-|\nabla\zeta|^2)^2\Big|\frac{\partial\zeta}{\partial{\bf n}}\Big|^2 = - (1-|\nabla\zeta|^2)^3
  + \rho_r^2 (1-|\nabla\zeta|^2)^2 < \frac{4}{9}\Big(\rho_r^2-\frac{2}{3}\Big) \leq \frac{4}{27}\rho_r^6,
\end{displaymath}
on $\partial\Omega$. Here the two sides are equal only when $\rho_r=1$, therefore the inequality constraint \eqref{eq:23} is stronger than that in \eqref{eq:37}.
\end{remark}
Without any loss of generality we may assume that $j\leq0$, because we can apply the transformation $(j,\varphi_{i0})\to(-j,-\varphi_{i0})$ otherwise. Following the rescaling \begin{equation}
\label{eq:39}
  \eta = \rho_r\frac{\tau}{\sigma_0^{1/2}}; \quad \vartheta(\eta)=\sigma_0^{1/2}\frac{\varphi_{i0}\left(\sigma_0^{1/2}\rho_r^{-1}\eta\right)}{\rho_r^2}; \quad
  \mu(\eta)=\frac{\rho_{i0}\left(\sigma_0^{1/2}\rho_r^{-1}\eta\right)}{\rho_r}; \quad j_r=\frac{j}{\rho_r^3}\,,
\end{equation} 
the problem \eqref{eq:35} takes the form
\begin{equation}
\label{eq:40}
    \begin{cases}
-\frac{1}{\sigma_0}\mu^{\prime\prime} -\left(1-\frac{\left(\vartheta^\prime-j_r\right)^2}{\mu^4}-\mu^2\right)\mu  = 0 & \text{in } \R_+ ,\\
  -\vartheta^{\prime\prime}+ \mu^2\vartheta = 0 & \text{in } \R_+, \\
    \mu^\prime(0)=0, & \\
     \vartheta^\prime(0)= j_r. & 
  \end{cases}
\end{equation}
We now assume that $\sigma_0$ is large and expand the solution of \eqref{eq:40} in terms of $\sigma_0^{-1}$. We proceed by proving existence of a solution of the corresponding leading order problem and then show that this solution serves as a good approximation for a solution of
\eqref{eq:40}.

\subsubsection{Analysis of the leading order problem in $\sigma_0^{-1}$}
The leading order approximation $(\mu_0,\vartheta_0)$ in $\sigma_0^{-1}$ of the solution $(\mu,\vartheta)$ of \eqref{eq:40} is obtained by neglecting the $\OO(\sigma_0^{-1})$-terms in
\eqref{eq:40}. Thus we look for solutions of
\begin{equation}
\label{eq:41}
  \begin{cases}
    \mu_0^4(1-\mu_0^2) = \left(\vartheta_0^{\prime} -j_r\right)^2 & \text{in } \R_+\,,\\
-\vartheta_0^{\prime\prime} + \mu_0^2\vartheta_0 =0 & \text{in } \R_+\,, \\
\vartheta_0^\prime(0) = j_r \,.
  \end{cases} 
\end{equation}
Because 
\[j_r^2<\frac{4}{27}\]
due to \eqref{eq:37} and Remark \ref{rmk3}, the algebraic equation for $\mu_0^2$ has two distinct positive solutions
whenever $j_r\leq\vartheta_0^\prime\leq0$. We choose the solution for which
\begin{equation}
\label{eq:42}
  \frac{2}{3} < \mu_0^2\leq 1\,.
\end{equation}

We have 
\begin{lemma}
\label{lem:inner-solution-leading}
  There exists a solution $(\mu_0,\vartheta_0)$ of \eqref{eq:41}, where
$(\mu_0-\mu_j,\vartheta_0)\in H^1(\R_+,\R^2)$ and $\mu_j\in\left(\sqrt{2/3},1\right]$ solves
\begin{equation}
\label{eq:43}
   \mu_j^4(1-\mu_j^2)= j_r^2.
\end{equation}
\end{lemma}
\begin{proof}
  We can reduce the system \eqref{eq:41} to a single equation for
$\vartheta_0$ and then obtain $\mu_0$ from $\vartheta_0^\prime-j_r$ by solving an
algebraic 
equation. Indeed, 
let $\mu_j$ satisfy \eqref{eq:43} and set
\begin{displaymath}
  V(t)\overset{def}{=}
  \begin{cases}
     1 & t\leq j_r \\
    \mu_0^2(t) & j_r\leq t\leq 0 \\
    \mu_j^2 & 0\leq t \,,
  \end{cases}
\end{displaymath}
 where $\mu_0(t)$ is determined as the solution of
 \begin{equation*}
    \mu_0^4(t)(1-\mu_0^2(t))=(t-j_r)^2\,,
 \end{equation*}
 satisfying \eqref{eq:42}.
We then look for a solution in $H^1(\R_+)$ of the problem
\begin{equation}
  \label{eq:44}
  \begin{cases}
-w^{\prime\prime} + V(w^\prime)w =0 &  \text{in } \R_+\,, \\
w^\prime(0) = j_r\,.
  \end{cases} 
\end{equation}
Note that the equation in \eqref{eq:44} can be written as
$\frac{w^{\prime\prime}}{V(w^\prime)}=w$ and it has a corresponding variational
formulation. Indeed, set 
\begin{equation*}
  L(p)= \int^p_0 \frac{p-t}{V(t)}\,dt\,,
\end{equation*}
so that $L^{\prime\prime}(p)=\frac{1}{V(p)}\geq 1$ and consider the functional
\begin{equation}
  \label{eq:45}
   \Jg(w)=kw(0)+\int_0^\infty \Big(L(w^\prime)+  \frac{w^2}{2}\Big)\,,\quad w\in H^1(\R_+)\,,
\end{equation}
where $k=-L^\prime(j_r)$.  
Any critical point of $\Jg$ is a solution of \eqref{eq:44} and since
$\Jg$ is strictly convex, there is at most one such critical point---the unique
global minimizer for $\Jg$.  
Thanks to the convexity of $\Jg$, in order to prove existence of a
minimizer it is enough to verify that $\Jg$ is coercive. But this is
evident from the inequality  
\begin{displaymath}
   \Jg(w) \geq \frac{1}{2}\|w\|_{1,2}^2-|k|c_0\|w\|_{1,2}\,,
\end{displaymath}
that can be obtained by observing that $L(p)\geq \frac{p^2}{2}$ 
and appealing to the Sobolev inequality in one
 dimension
 \begin{equation}
\label{eq:46}
   \|v\|_\infty\leq c_0 \|v\|_{1,2}\,,\quad v\in H^1(\R_+)\,.
 \end{equation}
It remains to show that the solution $w$ to \eqref{eq:44} also solves
\eqref{eq:41}. In fact, it would be sufficient to show that
\begin{equation}
  \label{eq:47}
   w^\prime(\eta)\in[j_r,0]\,,\quad\eta\in[0,\infty).
\end{equation}
First we establish that $w\geq0$ in $\R_+$. To this end, because $w\in H^1(\R_+)$
it vanishes at infinity
 \begin{equation}
   \label{eq:48}
\lim_{\eta\to\infty} w(\eta)=0\,.
 \end{equation}
 The function $w$ cannot have a negative minimum in $\R_+$ because $w$
 satisfies \eqref{eq:44} and $V>0$. It then follows from \eqref{eq:43}
 that $w^{\prime\prime}\geq0$ in $\R_+$ and $w^\prime$ is increasing from $j_r$ to
 $0$.
\end{proof}
\begin{remark}
\label{rem:leading-properties}
For future use we note that, taking into account \eqref{eq:42}, we have that
 \begin{equation}
\label{eq:stuff}
-w^{\prime\prime}+\frac{2}{3}w<0\,.
\end{equation}
Then the bound $w(\eta)<w(0)e^{-\sqrt{\frac{2}{3}}\eta}$ for $\eta>0$ can be
deduced via the maximum principle. Further, multiplying the inequality
\eqref{eq:stuff} by $w^\prime$, integrating over $\R_+$ and using
\eqref{eq:48}, we conclude that $w(0)<-\sqrt{3/2}j_r$ and
\begin{equation}
\label{eq:stuff1}
  w(\eta) < \vartheta_m(\eta):=-\sqrt{\frac{3}{2}} j_re^{-\sqrt{\frac{2}{3}}\eta} \,,
\end{equation}
 for all $\eta>0$.
 
Finally, note that we must have $\mu_0(0)=1$ and that by taking
the derivative of the first equation in \eqref{eq:41} it follows that
$\mu^\prime_0(0)=0$.  
\end{remark}

\subsubsection{Existence of a solution for \eqref{eq:40}}
This section is devoted to the proof of the following
\begin{lemma}
\label{lem:inner-solution-exist}
  For a sufficiently large $\sigma_0$ there exists a solution 
$(\mu-\mu_j,\vartheta)\in L^2(\R_+,\R^2)$ of \eqref{eq:40}. Furthermore, there
exists $C>0$ such that for sufficiently large $\sigma_o$ we have 
\begin{equation}
\label{eq:73}
  \sigma_0^{-1/2}\|\mu-\mu_0\|_\infty  +\|\vartheta^\prime-\vartheta_0^\prime\|_\infty + \|\vartheta-\vartheta_0\|_\infty \leq \frac{C}{\sigma_0}\,.
\end{equation}
\end{lemma}
\begin{proof}
  Set
\begin{displaymath}
  \mu_1 =\mu -\mu_0; \quad \vartheta_1=\vartheta - \vartheta_0  \,.
\end{displaymath}
We then use \eqref{eq:41} to rewrite \eqref{eq:40} in the form
\begin{displaymath}
   \begin{cases}
     -\frac{1}{\sigma_0}\mu_1^{\prime\prime}    + (6\mu_0^2-4)\mu_1 +
     2\frac{[1-\mu_0^2]^{1/2}}{\mu_0}\vartheta_1^\prime=
     \frac{1}{\sigma_0}\mu^{\prime\prime}_0 + N_1(\mu_1,\vartheta_1)& \text{in } \R_+\,, \\
   -\vartheta^{\prime\prime}_1+ \mu^2_0\vartheta_1+2\mu_0\mu_1 \vartheta_0  = N_2(\mu_1,\vartheta_1) & \text{in } \R_+\,, \\
    \mu^\prime_1(0)=0\,, & \\
     \vartheta^\prime_1(0)= 0\,. & 
  \end{cases}
\end{displaymath}
Here
\begin{subequations}
\label{eq:50}
  \begin{multline}
  N_1(\mu_1,\vartheta_1) =  -|\vartheta_1^\prime|^2\mu_0^{-3}
  +
  \left(\mu^{-3}-\mu_0^{-3}\right)\left(\mu^4-\left(\vartheta^\prime-j_r\right)^2-\mu^6\right)\\ 
+\mu_1\mu_0^{-3}\left(\mu\mu_0^2+\mu^2\mu_0+\mu^3-3\mu_0^3-\mu^5-\mu^4\mu_0-\mu^3\mu_0^2-\mu^2\mu_0^3-\mu\mu_0^4+5\mu_0^5\right)  
\end{multline}
and
\begin{equation}
  N_2(\mu_1,\vartheta_1) =  -\mu_1^2\vartheta_0 -  (2\mu_0+\mu_1)\mu_1\vartheta_1 \,. 
\end{equation}
\end{subequations}

Suppose that ${\mathbf U}=(U_1,U_2)\in H^2(\R_+,\R^2)$ and define the operators
\begin{subequations}
\label{eq:51}
\begin{empheq}[left={\empheqlbrace}]{alignat=2}
  &   \LL_1({\mathbf U}) = -\frac{1}{\sigma_0}U_1^{\prime\prime}    + (6\mu_0^2-4)U_1 +
     2\frac{[1-\mu_0^2]^{1/2}}{\mu_0}U_2^{\prime}  & \text{ in } &\R_+\,, \\
  & \LL_2({\mathbf U})=-U_2^{\prime\prime}+ \mu^2_0U_2 + 2\mu_0\vartheta_0U_1  & \text{ in }& \R_+ \,.
\end{empheq}
\end{subequations}
Set $\B:D(\B)\to L^2(\R_+,\R^2)$ to be defined by
\begin{equation}
\label{eq:42.5}
  \B\,{\mathbf U}=
\begin{bmatrix}
   \LL_1({\mathbf U}) \\
 \LL_2({\mathbf U}) 
\end{bmatrix}\,,
\end{equation}
where
\begin{displaymath}
  D(\B)=\{{\mathbf U}\in H^2(\R_+,\R^2)\,|\,{\mathbf U}^\prime(0)=0\,\}\,.
\end{displaymath}

 It can be easily verified, that there exists
 $\lambda\leq0$, for which the bilinear form 
\begin{displaymath}
  B({\mathbf U},{\mathbf V}) = \langle{(\B-\lambda\,\mathrm{Id}){\mathbf U},\mathbf V}\rangle=\int_{\R_+}(\B{\mathbf U}-\lambda{\mathbf U})\cdot{\mathbf V}\,,
\end{displaymath}
is coercive in $H^1(\R_+,\R^2)$. As 
\begin{displaymath}
  |B({\mathbf U},{\mathbf V})| \leq \|{\mathbf U}\|_2 \|{\mathbf V}\|_2 \,,
\end{displaymath}
it follows by the Lax-Milgram lemma that $(\B-\lambda\,\mathrm{Id})^{-1}$ is bounded. 

Let then ${\mathbf U}_\lambda=\left(U_{\lambda1},U_{\lambda2}\right)=(\B-\lambda\,\mathrm{Id})^{-1}{\mathbf F}$ where ${\mathbf F}^t=[F_1,F_2]$ and $F_1$ and $F_2$
are both in $L^2(\R_+)$.  By (\ref{eq:51}a) we have that
\begin{displaymath}
  U_{\lambda1}=  -\frac{2[1-\mu_0^2]^{1/2}}{\mu_0(6\mu_0^2-4-\lambda)}U_{\lambda2}^{\prime} +
  \frac{1}{\sigma_0 (6\mu_0^2-4-\lambda)}U_{\lambda1}^{\prime\prime}  +  \frac{1}{(6\mu_0^2-4-\lambda)}F_1 \,.
\end{displaymath}
Substituting this expression into (\ref{eq:51}b) yields
\begin{equation}
\label{eq:52}
  -U_{\lambda2}^{\prime\prime}-
  \frac{4\vartheta_0[1-\mu_0^2]^{1/2}}{6\mu_0^2-4-\lambda}U_{\lambda2}^{\prime} + 
  (\mu^2_0-\lambda)U_{\lambda2} = - \frac{2\mu_0\vartheta_0}{\sigma_0
    (6\mu_0^2-4-\lambda)}(U_{\lambda1}^{\prime\prime}+\sigma_0F_1) +F_2   
\end{equation}

Let
\begin{displaymath}
  G(\eta,\lambda) = \exp \Big\{ \int_0^\eta
  \frac{4\vartheta_0[1-\mu_0^2]^{1/2}}{6\mu_0^2-4-\lambda} \,d\xi \Big\} \,.
\end{displaymath}
By \eqref{eq:42}, the fact that $\lambda\leq0$, and since the function $\vartheta_0\geq0$ in $\R_+$, we have that $G\geq 1$. To establish an upper bound for $G$ we first set
\begin{displaymath}
  \delta = \inf_{\eta\in\R_+}6\mu_0^2(\eta)-4 \,,
\end{displaymath}
and use \eqref{eq:41} and \eqref{eq:42} to note that $\delta>0$ if $j_r^2<4/27$. Since $\vartheta_0\leq\vartheta_m$ on $\R_+$---where $\vartheta_m$ is defined in \eqref{eq:stuff1}---and using \eqref{eq:42} we obtain that
\begin{displaymath}
  G \leq   \exp \Big\{ \frac{4}{\sqrt{3}(\delta-\lambda)}\int_0^\infty 
  \vartheta_m(\eta) \Big\} \,d\eta=\exp\left\{{-\frac{2\sqrt{3}j_r}{\delta-\lambda}} \right\}\,,
\end{displaymath}
i.e., there exists a constant $C>0$, independent of $\delta$ and $\lambda$, such that
\begin{equation}
  \label{eq:53}
G \leq  \exp\left\{{\frac{C}{\delta-\lambda}}\right\}\,.
\end{equation}
To simplify notation, in the remainder of this argument $C$ will denote a generic constant independent of  of $\delta$ and $\lambda$.

We now multiply \eqref{eq:52} by $GU_{\lambda2}$ and integrate by parts to
obtain that
\begin{multline*}
  \|G^{1/2}U_{\lambda2}^\prime\|_2^2 + \|G^{1/2}\sqrt{\mu_0^2-\lambda}U_{\lambda2}\|_2^2 = \Big\langle\frac{2\mu_0\vartheta_0}{\sigma_0
    (6\mu_0^2-4-\lambda)}GU_{\lambda2}^\prime,U_{\lambda1}^\prime\Big\rangle+ \\ \Big\langle\Big(\frac{2\mu_0\vartheta_0}{\sigma_0
    (6\mu_0^2-4-\lambda)}G\Big)^\prime U_{\lambda2},U_{\lambda1}^\prime\Big\rangle -
  \Big\langle\frac{2\mu_0\vartheta_0}{6\mu_0^2-4-\lambda}F_1,GU_{\lambda2}\Big\rangle+ \langle F_2,GU_{\lambda2}\rangle \,.
\end{multline*}
This identity with the aid of \eqref{eq:53} allows us to conclude that
\begin{equation}
\label{eq:54}
   \|U_{\lambda2}^\prime\|_2 + \|U_{\lambda2}\|_2 \leq e^{\frac{C}{\delta-\lambda}}\Big(\|{\mathbf F}\|_2 +
   \frac{1}{\sigma_0}\|U_{\lambda1}^\prime\|_2\Big)\,.
\end{equation}
We next observe that 
\begin{multline*}
  \langle U_{\lambda1},\LL_1({\mathbf U}_\lambda)-\lambda U_{\lambda1}\rangle = \frac{1}{\sigma_0}\|U_{\lambda1}^\prime\|_2^2 +
  \|(6\mu_0^2-4)^{1/2}U_{\lambda_1}\|_2^2 \\  -\lambda\|U_{\lambda1}\|_2^2 
  +\Big\langle2\frac{[1-\mu_0^2]^{1/2}}{\mu_0}U_{\lambda2}^{\prime},U_{\lambda1}\Big\rangle = \langle F_1,U_{\lambda1}\rangle\,.
\end{multline*}
With the aid of \eqref{eq:54} we then obtain,
\begin{displaymath}
  \frac{1}{\sigma_0}\|U_{\lambda1}^\prime\|_2^2 +
  \delta\|U_{\lambda1}\|_2^2 \leq  e^{\frac{C}{\delta-\lambda}}\Big(\|{\mathbf F}\|_2\|U_{\lambda1}\|_2 +
  \frac{1}{\sigma_0}\|U_{\lambda1}^\prime\|_2\|U_{\lambda1}\|_2\Big) \,,
\end{displaymath}
from which we easily conclude that for some $C>0$, 
\begin{displaymath}
  \|U_{\lambda1}\|_2 \leq e^{\frac{C}{\delta-\lambda}}\Big( \frac{1}{\sigma_0}\|U_{\lambda1}^\prime\|_2 + \|{\mathbf F}\|_2\Big)\,.
\end{displaymath}
Consequently, we obtain that
\begin{displaymath}
\frac{1}{\sigma_0^{1/2}}\|U_{\lambda1}^\prime\|_2 + \|U_{\lambda1}\|_2 \leq e^{\frac{C}{\delta-\lambda}}\|{\mathbf F}\|_2\,.
\end{displaymath}
Combining the above with \eqref{eq:54} then yields
\begin{equation}
  \label{eq:55}
\frac{1}{\sigma_0^{1/2}}\|U_{\lambda1}^\prime\|_2 + \|U_{\lambda2}^\prime\|_2 + \|{\mathbf U}\|_2 \leq e^{\frac{C}{\delta-\lambda}}\|{\mathbf F}\|_2\,.
\end{equation}

It is well known that the real resolvent set $\rho(\B)\cap\R$ is
open. Furthermore, by \eqref{eq:55} $ \rho(\B)\cap(-\infty,\delta/2)$ must be
closed. Hence $(-\infty,\delta/2)\subset\rho(\B)$ and, in particular,
$0\in\rho(\B)$. It follows that
\begin{equation}
  \label{eq:56}
  {\mathbf U}=\B^{-1}{\mathbf F}
\end{equation}
is well defined and satisfies
  \begin{equation}
\label{eq:57}
\frac{1}{\sigma_0^{1/2}}\|U_1^\prime\|_2 + \|U_1\|_2 + \|U_2\|_{2,2} \leq
C_\delta\|{\mathbf F}\|_2\,,
\end{equation}
by \eqref{eq:51} and \eqref{eq:55}.

Let now $(\mu_1,\vartheta_1)\in\Wg$ where $\Wg=H^1(\R_+)\times H^2(\R_+)$ is equipped with the
norm
\begin{equation}
\label{eq:58}
   \|(u_1,u_2)\|_{\Wg} = \|u_1\|_2 +
   \frac{1}{\sigma^{1/2}_0}\|u_1^\prime\|_2+ \|u_2\|_{2,2} \,,
\end{equation}
and note that \eqref{eq:57} can be rewritten as
 \begin{equation}
   \label{eq:59}
   \|\B^{-1}{\mathbf F}\|_{\Wg}\leq C_\delta\|{\mathbf F}\|_2\,.
 \end{equation}
Next, let $r$ satisfy
\begin{equation}
  \label{eq:60}
 r=\sigma_0^{-\alpha} \text{ for some }3/4<\alpha<1\,.
\end{equation}
Consider $(\mu_1,\vartheta_1)\in B(0,r)$
   and set 
\begin{equation}
\label{eq:61}
   F_1= \frac{1}{\sigma_0}\mu^{\prime\prime}_0 + N_1(\mu_1,\vartheta_1); \quad F_2= N_2(\mu_1,\vartheta_1)\,,
\end{equation}
with $N_1$ and $N_2$ as defined in \eqref{eq:50}.
Applying the Sobolev inequality \eqref{eq:46} with $v=\mu_1$, using \eqref{eq:60}, yields
\begin{equation}
  \label{eq:62}
  \|\mu_1\|_\infty\leq C\|\mu_1\|_{1,2}\leq C\sigma_0^{1/2}r\leq \sigma_0^{-1/4}\,,
\end{equation}
for sufficiently large $\sigma_0$. In particular,
\begin{equation}
  \label{eq:63}
  \mu\in(\mu_0/2,3\mu_0/2)\,.
\end{equation}
Similarly,
\begin{equation}
  \label{eq:64}
\begin{aligned}
  \|\vartheta_1^\prime\|_\infty&\leq
 C\|\vartheta_1^\prime\|_{1,2}\leq Cr\leq
 \sigma_0^{-3/4}\,,\\
\|\vartheta_1\|_\infty&\leq C \|\vartheta_1\|_{1,2}\leq Cr\leq
 \sigma_0^{-3/4}\,.
\end{aligned}
\end{equation}
Using \eqref{eq:63}-\eqref{eq:64} in \eqref{eq:50} yields,
\begin{equation}
\label{eq:65}
  \begin{aligned}
    |N_1(\mu_1,\vartheta_1)|&\leq c_1|\mu_1|^2+c_2|\vartheta_1^\prime|^2\,,\\
    |N_2(\mu_1,\vartheta_1)|& \leq c_3|\mu_1|^2+c_4|\vartheta_1|^2\,,
  \end{aligned}
\end{equation}
for some constants $c_1,\ldots,c_4$. Now, with the aid of
\eqref{eq:62} and \eqref{eq:64} equation \eqref{eq:65} gives
\begin{equation}
  \label{eq:66}
\begin{aligned}
 \| N_1(\mu_1,\vartheta_1)\|_2&\leq C\left(\|\mu_1\|_2\|\mu_1\|_\infty+
 \|\vartheta_1^\prime\|_2\|\vartheta_1^\prime\|_\infty\right)\\
 &\leq
 C\|\mu_1\|_2(\|\mu_1\|_2+\|\mu_1^\prime\|_2)+C\|\vartheta_1^\prime\|_2(\|\vartheta_1^\prime \|_2+\|\vartheta_1^{\prime\prime}\|_2)\\
&\leq C \|(\mu_1,\vartheta_1)\|_{\Wg}^2(\sigma_0^{1/2}+1)\,.
\end{aligned}
\end{equation}
 By a similar argument also
 \begin{equation}
   \label{eq:67}
    \| N_2(\mu_1,\vartheta_1)\|_2\leq C \|(\mu_1,\vartheta_1)\|_{\Wg}^2(\sigma_0^{1/2}+1)\,.
 \end{equation}
By \eqref{eq:66}--\eqref{eq:67} and \eqref{eq:61} we obtain, for a sufficiently large $\sigma_0$, that
\begin{equation}
\label{eq:68}
  \|{\mathbf F}\|_2 \leq C\Big(\sigma_0^{1/2}r^2+\frac{1}{\sigma_0}\Big) \,.
\end{equation}
We now define the non-linear operator $\A:\Wg\to \Wg$ by
$\A(\mu_1,\vartheta_1)={\mathbf U}$, where ${\mathbf U}$ is defined via
\eqref{eq:56} and ${\mathbf F}$ given by \eqref{eq:61}.

 We first show that $\A:B(0,r)\to B(0,r)$. To this
end we use \eqref{eq:59}, \eqref{eq:60}, and \eqref{eq:68} to obtain that
\begin{equation}
\label{eq:69}
   \|{\mathbf U}\|_{\Wg}
   \leq C\Big(\sigma_0^{1/2}r^2+\frac{1}{\sigma_0}\Big)\leq C\sigma_0^{-1}<r\,,
\end{equation}
for a sufficiently large $\sigma_0$, i.e.,  $\A(\mu_1,\vartheta_1)\in B(0,r)$. 

Finally, we prove that $\A$ is a contraction. Let
$(v_1,w_1)$ and $(v_2,w_2)$ be in $B(0,r)$. Let
${\mathbf V}=(v_1-v_2,w_1-w_2)$. 
A direct computation, using 
\eqref{eq:50}, gives
\begin{multline}
  \label{eq:70}
  | N_1(v_1,w_1) -N_1(v_2,w_2)|+| N_2(v_1,w_1)
  -N_2(v_2,w_2)|\leq\\ C\max(|v_1|,|w_1^\prime|,|w_1|,|v_2|,|w_2^\prime|,|w_2|)(|v_1-v_2|+|w_1-w_2|+|w_1^\prime-w_2^\prime|)\,.
\end{multline}
From \eqref{eq:62}, \eqref{eq:64}, and \eqref{eq:70} we obtain, 
 \begin{equation}
\label{eq:71}
  \|N_1(v_1,w_1)
  -N_1(v_2,w_2)\|_2 + \|N_2(v_1,w_1)
  -N_2(v_2,w_2)\|_2 \leq  C\sigma_0^{-1/4}\|{\mathbf V}\|_{\Wg} \,.
\end{equation}
Finally, applying \eqref{eq:59} and \eqref{eq:71} we get that
\begin{align*}
  \| \A(v_1,w_1)&- \A(v_2,w_2) \|_{\Wg}=\\
&\left\|\B^{-1}\left(\frac{\mu_0^{\prime\prime}}{\sigma_0}+N_1(v_1,w_1),N_2(v_1,w_1)\right)-\B^{-1}\left(\frac{\mu_0^{\prime\prime}}{\sigma_0}+N_1(v_2,w_2),N_2(v_2,w_2)\right)\right\|_{\Wg}\\
&\leq C\sigma_0^{-1/4}\|{\mathbf V}\|_{\Wg}\leq\frac{1}{2}\|{\mathbf V}\|_{\Wg}\,,
\end{align*}
 for large enough $\sigma_0$, i.e., $\A:B(0,r)\to B(0,r)$ is a strict
 contraction.
Applying
Banach Fixed Point Theorem
completes the proof of existence.

 Note that by \eqref{eq:69} we have that
\begin{equation}
\label{eq:72}
  \|\mu-\mu_0\|_2 + \sigma^{-1/2}_0\|\mu^\prime-\mu_0^\prime\|_2 + \|\vartheta-\vartheta_0\|_{2,2} \leq \frac{C}{\sigma_0} \,,
\end{equation}
which easily yields \eqref{eq:73}.
\end{proof}

\subsubsection{Decay estimate for $(\mu,\vartheta)$}
To complete the proof of Proposition \ref{lem:2.2} we need to establish additional properties of $(\mu,\vartheta)$.
\begin{lemma}
\label{lem:inner-solution-properties}
  There exist $C>0$ and $\gamma>0$ such that any solution $(\mu,\vartheta)$ of \eqref{eq:40} satisfies
  \begin{equation}
    \label{eq:141}
|\mu-\mu_j|+|\mu^\prime| + |\vartheta|+|\vartheta^\prime|\leq Ce^{-\gamma\eta}\,.
  \end{equation}
Furthermore,
\begin{equation}
  \label{eq:74}
\mu\geq\mu_j\,.
\end{equation}
\end{lemma}
\begin{proof}
  Since $\vartheta$ can have neither a positive maximum nor a negative
minimum we obtain that $\vartheta$ positive and decreasing, and hence
also that $j_r<\vartheta^\prime<0$. Consequently, if at some point
$\sqrt{2/3}<\mu<\mu_j$, we have that
\begin{displaymath}
  1-\frac{(\vartheta^\prime-j_r)^2}{\mu^4}-\mu^2>1-\frac{j_r^2}{\mu^4}-\mu^2>0\,.
\end{displaymath}
Further, $\mu^{\prime\prime}<0$ so that $\mu$ cannot have a minimum
value between $\sqrt{2/3}$ and $\mu_j$. In view of \eqref{eq:72} and since $\mu_0>\mu_j$ in $\R_+$ by
\eqref{eq:41}, we obtain that for a sufficiently
large $\sigma_0$, the inequality \eqref{eq:74} must be satisfied. 

A standard comparison argument along with the Hopf lemma now shows that
\begin{equation}
\label{eq:bef90}
  \vartheta  \leq -\frac{j_r}{\mu_j}e^{-\mu_j\eta} \,,
\end{equation}
in $\mathbb R_+$. Further, since $\mu<1$, integrating the second equation in \eqref{eq:40} gives
\begin{displaymath}
  \vartheta^\prime(\eta)=-\int_\eta^\infty\mu^2\vartheta\,d\tilde\eta  \geq \int_\eta^\infty\frac{j_r}{\mu_j}e^{-\mu_j\tilde\eta}\,d\tilde\eta=\frac{j_r}{\mu_j^2}e^{-\mu_j\eta} \,.
\end{displaymath}
It follows that
\begin{equation}
\label{eq:135}
  |\vartheta|+\mu_j|\vartheta^\prime| \leq -\frac{2j_r}{\mu_j}e^{-\mu_j\eta} \,,
\end{equation}
To prove exponential decay of $\mu - \mu_j$ we first observe that
\begin{displaymath}
  -\Big(1-\frac{(\vartheta^\prime-j_r)^2}{\mu^4}-\mu^2\Big)\mu >
  (6\mu_j^2-4)(\mu-\mu_j) + Ce^{-\mu_j\eta} \,.
\end{displaymath}
Standard comparison arguments and \eqref{eq:34} complete the proof of
\eqref{eq:141}. 
\end{proof}
The proof of Proposition \ref{lem:2.2} now follows from Lemma
\ref{lem:inner-solution-exist}, Lemma
\ref{lem:inner-solution-properties} and the transformation
\eqref{eq:39}. 

\subsubsection{Estimates on derivatives of $(\mu,\vartheta)$ along the boundary}
For later reference, we need to obtain some estimates on the
derivatives of $\mu$ and $\vartheta$ with respect to $s$, which is merely a
parameter in \eqref{eq:40}. Let then $\tilde\mu=\mu-\mu_j$. Taking the derivative
of \eqref{eq:40} with respect to $s$ yields
\begin{equation}
\label{eq:75}
  \begin{cases}
  \begin{split}
-\frac{1}{\sigma_0}\Big(\frac{\partial{\tilde\mu}}{\partial s}\Big)^{\prime\prime} +
\Big(&6\mu^2-4-\frac{3}{\sigma_0}\frac{\mu^{\prime\prime}}{\mu}\Big) \frac{\partial{\tilde\mu}}{\partial s} +
2\frac{\vartheta^\prime-j_r}{\mu^3}\frac{\partial\vartheta^\prime}{\partial s} \\ & =-
\Big(6\mu^2-4-\frac{3}{\sigma_0}\frac{\mu^{\prime\prime}}{\mu}\Big)
\frac{\partial\mu_j}{\partial s}  +2\frac{\vartheta^\prime-j_r}{\mu^3}\frac{\partial j_r}{\partial s} , \end{split} & \\
   -\Big(\frac{\partial\vartheta}{\partial s}\Big)^{\prime\prime} +
   \mu^2\Big(\frac{\partial\vartheta}{\partial s}\Big)+2\mu\vartheta\frac{\partial\tilde\mu}{\partial s}= -2\mu\vartheta\frac{\partial\mu_j}{\partial s}, &  \\
    \Big(\frac{\partial{\tilde\mu}}{\partial s}\Big)^\prime(0)=0, & \\
    \Big(\frac{\partial\vartheta}{\partial s}\Big)^\prime(0)= \frac{\partial j_r}{\partial s}\,,
  \end{cases}
\end{equation}
after some manipulations where \eqref{eq:40} is used once again.
 
We can now prove the following
\begin{lemma}
\label{lem:diff-s}
  Let $\left(\frac{\partial{\tilde\mu}}{\partial s},\frac{\partial\vartheta}{\partial s}\right)$ denote a solution of \eqref{eq:75}. Then, there exists
  some $C(j_r)>0$ such that,
  \begin{equation}
    \label{eq:76}
  \Big\|\frac{\partial{\tilde\mu}}{\partial s}\Big\|_{C^2(\R_+)}+  \Big\|\frac{\partial\vartheta}{\partial
    s}\Big\|_{C^2(\R_+)}\leq C \,.
  \end{equation}
Furthermore, there exists some $\gamma>0$ such that
\begin{equation}
\label{eq:77}
  \Big|\frac{\partial\vartheta^\prime}{\partial s}\Big| + \Big|\frac{\partial\tilde{\mu}}{\partial s}\Big|\leq C e^{-\gamma\eta} \,.
\end{equation}
\end{lemma}
\begin{proof}
In order to replace the system in \eqref{eq:75} by a system with
homogeneous boundary conditions we change variables and     let
  \begin{displaymath}
    \frac{\partial\tilde{\vartheta}}{\partial s}= \frac{\partial \vartheta}{\partial s}- \frac{\partial j_r}{\partial s}e^{-\mu_j\eta} \,.
  \end{displaymath}
We now represent \eqref{eq:75} in the following manner
\begin{subequations}
\label{eq:78}
\begin{empheq}[left={\empheqlbrace}]{alignat=2}
&\LL_1\left(\frac{\partial{\tilde\mu}}{\partial s}, \frac{\partial \tilde\vartheta}{\partial s}\right)  =f_1  & \text{ in } &\R_+\,,  \\
&  \LL_2\left(\frac{\partial{\tilde\mu}}{\partial s}, \frac{\partial \tilde\vartheta}{\partial s}\right) =f_2  & \text{ in } &\R_+\,, \\ 
&    \Big(\frac{\partial{\tilde\mu}}{\partial s}\Big)^\prime(0)=0\,, & & \\
 &   \Big(\frac{\partial\tilde{\vartheta}}{\partial s}\Big)^\prime(0)=0 \,, & &
\end{empheq}
\end{subequations}
where
\begin{multline*}
  f_1= \left(6[\mu_j^2-\mu^2]+\frac{3}{\sigma_0}\frac{\mu^{\prime\prime}}{\mu}\right)\frac{\partial\mu_j}{\partial s}
  +2\left[\frac{j_r}{\mu_j^3}-\frac{j_r}{\mu^3}+\frac{\vartheta^\prime}{\mu^3}\right]\frac{\partial
    j_r}{\partial s}  + \\
    \left[6(\mu_0^2-\mu^2)+\frac{3}{\sigma_0}\frac{\mu^{\prime\prime}}{\mu}\right]\frac{\partial{\tilde\mu}}{\partial s} +2\left[\frac{[1-\mu_0^2]^{1/2}}{\mu_0}-\frac{\vartheta^\prime-j_r}{\mu^3}
  \right]\frac{\partial\tilde{\vartheta^\prime}}{\partial s} +2\mu_j\frac{\vartheta^\prime-j_r}{\mu^3} \frac{\partial j_r}{\partial
    s}e^{-\mu_j\eta}\,,
\end{multline*}
\begin{displaymath}
  f_2= -2\mu\vartheta\frac{\partial\mu_j}{\partial s} + (\mu_0^2-\mu^2)
  \frac{\partial\tilde{\vartheta}}{\partial s} +   (\mu_j^2-\mu^2)\frac{\partial j_r}{\partial s}e^{-\mu_j\eta}+2(\mu_0\vartheta_0-\mu\vartheta)\frac{\partial{\tilde\mu}}{\partial s} \,,
\end{displaymath}
and the operators $\LL_1$ and $\LL_2$ are as defined in \eqref{eq:51}. Here, in order to obtain the expression for $f_1$, we have also used the derivative with respect to $s$ of the equation \eqref{eq:43}.

Equivalently, we can represent \eqref{eq:78} in the form
\begin{equation}
\label{eq:79}
  \B V = \tilde{\B}V + F\,,
\end{equation}
where $\B$ is given by \eqref{eq:42.5} while
\begin{displaymath}
  V =
  \begin{bmatrix}
    \partial{\tilde\mu}/\partial s  \\ 
\partial \tilde\vartheta/\partial s 
  \end{bmatrix},
  \end{displaymath}
  \begin{displaymath}
  F=
  \begin{bmatrix}
\Big(6[\mu_j^2-\mu^2]+\frac{3}{\sigma_0}\frac{\mu^{\prime\prime}}{\mu}\Big)\frac{\partial\mu_j}{\partial s}
  +2\Big[\frac{j_r}{\mu_j^3}-\frac{j_r}{\mu^3}+\frac{\vartheta^\prime}{\mu^3}\Big]\frac{\partial
    j_r}{\partial s} +2\mu_j\frac{\vartheta^\prime-j_r}{\mu^3} \frac{\partial j_r}{\partial s}e^{-\mu_j\eta}  \\    
 -2\mu\vartheta\frac{\partial\mu_j}{\partial s}  +   (\mu_j^2-\mu^2)\frac{\partial j_r}{\partial s}e^{-\mu_j\eta}
  \end{bmatrix},
\end{displaymath}
and
\begin{displaymath}
   \tilde{\B}=
   \begin{bmatrix}
   6(\mu_0^2-\mu^2)+\frac{3}{\sigma_0}\frac{\mu^{\prime\prime}}{\mu} &
   2\Big[\frac{[1-\mu_0^2]^{1/2}}{\mu_0}-\frac{\vartheta^\prime-j_r}{\mu^3}\Big]\frac{d}{d\eta}
   \\
2(\mu_0\vartheta_0-\mu\vartheta) & (\mu_0^2-\mu^2)
   \end{bmatrix}.
\end{displaymath}
By \eqref{eq:38} and \eqref{eq:40} we have that for some $\gamma_0>0$
\begin{equation}
\label{eq:80}
  \|e^{\gamma_0\eta}F\|_2 <\infty \,.
\end{equation}
In view of \eqref{eq:40},
\eqref{eq:41} and \eqref{eq:73} we have that
\begin{equation}
\label{eq:81}
  \Big|\frac{1}{\sigma_0}\frac{\mu^{\prime\prime}}{\mu}\Big|=
  \Big|1-\frac{(\vartheta^\prime-j_r)^2}{\mu^4}-\mu^2\Big|\leq \frac{C}{\sigma_0^{1/2}} \,.
\end{equation}
Furthermore, by \eqref{eq:41} and \eqref{eq:73} we have that
\begin{equation}
\label{eq:82}
  \Big|\frac{[1-\mu_0^2]^{1/2}}{\mu_0}-\frac{\vartheta^\prime-j_r}{\mu^3}\Big|\leq
  \frac{C}{\sigma_0^{1/2}} \,.
\end{equation}
Hence, with the aid of \eqref{eq:73} we obtain that 
\begin{displaymath}
  \|\tilde{\B}V\|_2 \leq
  \frac{C}{\sigma_0^{1/2}}\Big[\|V\|_2+\Big\|\frac{\partial\tilde{\vartheta^\prime}}{\partial s}\Big\|_2\Big]\leq \frac{C}{\sigma_0^{1/2}}\|V\|_\Wg  \,.
\end{displaymath}
By \eqref{eq:59} we now obtain,
\begin{displaymath}
  \|V\|_\Wg \leq C(\|\tilde{\B}V\|_2+\|F\|_2)\leq \frac{C}{\sigma_0^{1/2}}\|V\|_\Wg +C\|F\|_2\,,
\end{displaymath}
whence
\begin{equation}
\label{eq:83}
   \|V\|_\Wg \leq C\|F\|_2 \,.
\end{equation}
Using the above and a standard ODE regularity argument we can easily
prove \eqref{eq:76}. 

To prove \eqref{eq:77} we take the inner product of (\ref{eq:78}b)
in $L^2(\R_+)$ with $e^{2\gamma\eta}  \frac{\partial\tilde{\vartheta}}{\partial s}$ to obtain,
using the fact that $\mu\geq\mu_j$ by \eqref{eq:74},
\begin{multline*}
  \left\|\left(e^{\gamma\eta}  \frac{\partial\tilde{\vartheta}}{\partial s}\right)^\prime\right\|_2^2 + \left(\mu^2_j-\gamma^2\right) \left\|e^{\gamma\eta}  \frac{\partial\tilde{\vartheta}}{\partial s}\right\|_2^2
  \\ \leq   \left\|e^{\gamma\eta}  \frac{\partial\tilde{\vartheta}}{\partial s}\right\|_2 \left(\left\|e^{\gamma\eta}2\mu\vartheta\left[ \frac{\partial\tilde\mu}{\partial s}+ \frac{\partial\mu_j}{\partial s}\right]\right\|_2+ \left\|e^{\gamma\eta}(\mu_j^2-\mu^2) \frac{\partial j_r}{\partial s}e^{-\mu_j\eta}\right\|_2\right)\,.
\end{multline*}
 Hence, by \eqref{eq:38} and \eqref{eq:83},
\begin{equation}
\label{eq:84}
   \left\|\left(e^{\gamma\eta}  \frac{\partial\tilde{\vartheta}}{\partial s}\right)^\prime\right\|_2^2 + \left\|e^{\gamma\eta}  \frac{\partial\tilde{\vartheta}}{\partial s}\right\|_2^2
   \leq C,
\end{equation}
for a sufficiently small $\gamma$. Taking the inner product of (\ref{eq:78}a)
in $L^2(\R_+)$ with $e^{2\gamma\eta}  \frac{\partial\tilde{\mu}}{\partial s}$ for some
$\gamma<\gamma_0$ yields 
\begin{multline*}
  \frac{1}{\sigma^2_0}\left\|\left(e^{\gamma\eta}  \frac{\partial\tilde{\mu}}{\partial
        s}\right)^\prime\right\|_2^2 + \left(6\mu^2_j-4-\gamma^2\right)
  \left\|e^{\gamma\eta}  \frac{\partial\tilde{\mu}}{\partial s}\right\|_2^2
  \\ \leq   \left\|e^{\gamma\eta}  \frac{\partial\tilde{\mu}}{\partial s}\right\|_2
  \left(C\left\|e^{\gamma\eta} \left(\frac{\partial\tilde{\vartheta}}{\partial
          s}\right)^\prime\right\|_2+ \left\|e^{\gamma\eta}f_1\right\|_2\right)\,. 
\end{multline*}
With the aid of \eqref{eq:80}, \eqref{eq:84}, \eqref{eq:81}, and
\eqref{eq:82}, we than obtain that
\begin{displaymath}
    \frac{1}{\sigma^2_0}\left\|\left(e^{\gamma\eta}  \frac{\partial\tilde{\mu}}{\partial
        s}\right)^\prime\right\|_2^2 + \left\|e^{\gamma\eta}  \frac{\partial\tilde{\mu}}{\partial s}\right\|_2^2
   \leq C\,.
\end{displaymath}
Sobolev embeddings then provide \eqref{eq:77}. Note that
the constant $C$ in \eqref{eq:77} may depend on $\sigma_0$, a fact that
shouldn't be of any concern to us, since in the sequel we keep $\sigma_0$
fixed (though sufficiently large) while letting $\epsilon\to0$. 
\end{proof}
\begin{remark}
\label{rem:2diff-s}
  Repeating the procedure outlined above, one can similarly obtain 
\begin{equation}
\label{eq:85}
   \Big\|\frac{\partial^2\mu}{\partial s^2}\Big\|_{C^2(\R_+)} +  \Big\|\frac{\partial^2\vartheta}{\partial s^2}\Big\|_{C^2(\R_+)} \leq
  C \,, 
\end{equation}
and that
\begin{equation}
\label{eq:86}
  \Big|\frac{\partial^2\tilde{\mu}}{\partial s^2}\Big| + \Big|\frac{\partial^2\vartheta^\prime}{\partial s^2}\Big| \leq Ce^{-\gamma\eta}  \,.
\end{equation}
\end{remark}

Finally, by \eqref{eq:34} we obtain the bound
\begin{displaymath}
  \Big\|\frac{\partial\upsilon^\prime_{i0}}{\partial s}\Big\|_{C^1(\R_+)} +
  \Big\|\frac{\partial^2\upsilon^\prime_{i0}}{\partial s^2}\Big\|_{C^1(\R_+)}\leq C
\end{displaymath}
so that by (\ref{eq:33}c), \eqref{eq:38}, \eqref{eq:76}, and \eqref{eq:77} we have that
\begin{equation}
\label{eq:87}
  \Big|\frac{\partial\upsilon^\prime_{i0}}{\partial s}\Big|\leq Ce^{-\gamma\eta} \,.
\end{equation}

\subsection{Solution of the inner problem---$\mathcal{O}\left(\epsilon\right)$-term}
To prove that the above solution of \eqref{eq:35} is indeed a good
inner approximation, we need to determine the
next order term in the inner expansion. We thus seek
a solution, denoted by  $(\rho_{i1},\varphi_{i1},\upsilon_{i1})$, of the following problem 
\begin{subequations}
\label{eq:88}
\begin{empheq}[left={\empheqlbrace}]{alignat=2}
&-\rho^{\prime\prime}_{i1}
-\left(\rho_r^2-\left|\upsilon_{i0}^\prime+\frac{\partial\zeta}{\partial t}(s,0)\right|^2-3\rho_{i0}^2\right)\rho_{i1}+2 \rho_{i0}\left(\upsilon_{i0}^\prime+\frac{\partial\zeta}{\partial t}(s,0)\right)\upsilon_{i1}^\prime
& \notag \\ 
&  \qquad = R_\rho(s,\tau) \quad &\text{in } \R_+, \\
&  -\sigma_0\varphi_{i1}^{\prime\prime}+ \rho_{i0}^2\varphi_{i1} + 2\rho_{i0}\rho_{i1}\varphi_{i0}=
  \sigma_0\kappa\varphi_{i0}^\prime  \quad &\text{in } \R_+, \\
&
\left(\rho_{i0}^2\upsilon_{i1}^\prime\right)^\prime+2\left(\rho_{i0}\rho_{i1}\left(\upsilon_{i0}^\prime+\frac{\partial\zeta}{\partial t}(s,0)\right)\right)^\prime
-\sigma_0\varphi_{i1}^{\prime\prime}=R_\upsilon(s,\tau)  \quad &\text{in } \R_+, \\ 
   & \rho_{i1}^\prime(0)=0, &  \\
    & \varphi_{i1}^\prime(0)= 0, & \\
    & \upsilon^\prime_{i1}(0) = 0\,,  & 
\end{empheq}
\end{subequations}
where
\begin{multline}
\label{eq:89}
R_\rho(s,\tau):=-\kappa\rho_{i0}^\prime-2\rho_{i0}\tau\left(\upsilon_{i0}^\prime+\frac{\partial\zeta}{\partial t}(s,0)\right)\frac{\partial^2\zeta}{\partial t^2}(s,0) \\ -2\rho_{i0}\left(\kappa\tau\frac{\partial\zeta}{\partial s}(s,0)+\tau\frac{\partial^2\zeta}{\partial s\partial t}(s,0)+\frac{\partial\upsilon_{i0}}{\partial s}\right)\frac{\partial\zeta}{\partial s}(s,0),
\end{multline}
and
\begin{displaymath}
R_\upsilon(s,\tau):=-\kappa{j} - \frac{\partial}{\partial s}\left(\rho^2_{i0}\frac{\partial\zeta}{\partial
    s}(s,0)\right)-\left(\rho_{i0}^2\tau\frac{\partial^2\zeta}{\partial
  t^2}(s,0)\right)^\prime \,.
\end{displaymath}
This system is obtained by first extracting terms of order $\epsilon$ in
(\ref{eq:31}) and using (\ref{eq:34}). 

In what follows, we seek estimates of the right hand sides of the equations (\ref{eq:88}a)-(\ref{eq:88}c). We will need the following set of identities. 
\begin{lemma}
\label{l:1}
Suppose that $\zeta$, $j$ and $\rho_j$ are as defined in \eqref{eq:15} and \eqref{eq:29.5}, respectively. Then 
\begin{equation}
\label{eq:135.1}
\kappa{\left|\frac{\partial\zeta}{\partial s}\right|}^2+\frac{\partial\zeta}{\partial t}\frac{\partial^2\zeta}{\partial t^2}+\frac{\partial\zeta}{\partial s}\frac{\partial^2\zeta}{\partial t\partial s}-\frac{1}{2}\frac{\partial}{\partial t}|\nabla\zeta|^2=0
\end{equation}
and
\begin{equation}
\label{eq:135.2}
\kappa j+\frac{\partial}{\partial s}\left(\rho_j^2\frac{\partial\zeta}{\partial s}\right)+\frac{\partial}{\partial t}\left(\rho_j^2\frac{\partial\zeta}{\partial t}\right)=0
\end{equation}
hold on $\partial\Omega$.
\end{lemma}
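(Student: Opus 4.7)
The plan is to derive both identities directly from the expressions of $\nabla\zeta$ and $\nabla\cdot(\cdot)$ in the curvilinear coordinate system $(s,t)$ introduced before the lemma, with scale factors $h_s=\mathfrak{g}=1-t\kappa(s)$ and $h_t=1$. The fundamental ingredients are the representation in the orthonormal frame $(\mathbf{t},-\mathbf{n})$,
\begin{equation*}
\nabla\zeta \;=\; \frac{1}{\mathfrak{g}}\frac{\partial\zeta}{\partial s}\,\mathbf{t}\;-\;\frac{\partial\zeta}{\partial t}\,\mathbf{n}, \qquad |\nabla\zeta|^{2} \;=\; \frac{1}{\mathfrak{g}^{2}}\Big(\frac{\partial\zeta}{\partial s}\Big)^{2}+\Big(\frac{\partial\zeta}{\partial t}\Big)^{2},
\end{equation*}
together with the restrictions $\mathfrak{g}|_{t=0}=1$, $\partial_{s}\mathfrak{g}|_{t=0}=0$, and $\partial_{t}\mathfrak{g}|_{t=0}=-\kappa(s)$. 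Note that the outward normal derivative $\partial\zeta/\partial\mathbf{n}$ equals $-\partial\zeta/\partial t$ at $t=0$, so the boundary condition in \eqref{eq:7} reads $-(1-|\nabla\zeta|^{2})\partial_{t}\zeta|_{t=0}=j$, i.e.\ $\rho_{j}^{2}\,\partial_{t}\zeta|_{t=0}=-j$.

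For the first identity, I would simply differentiate the formula $|\nabla\zeta|^{2}=\zeta_{s}^{2}/\mathfrak{g}^{2}+\zeta_{t}^{2}$ with respect to $t$, obtaining
\begin{equation*}
\tfrac{1}{2}\partial_{t}|\nabla\zeta|^{2} \;=\; \frac{\zeta_{s}\zeta_{st}}{\mathfrak{g}^{2}} \;-\; \frac{\zeta_{s}^{2}\,\partial_{t}\mathfrak{g}}{\mathfrak{g}^{3}} \;+\; \zeta_{t}\zeta_{tt}.
\end{equation*}
Evaluating at $t=0$ replaces the middle term by $\kappa\,\zeta_{s}^{2}$, and the result rearranges to \eqref{eq:135.1}. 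Thus identity \eqref{eq:135.1} is purely geometric---it does not use the equation but only the local structure of the $(s,t)$-chart near $\partial\Omega$.

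For the second identity, I would apply the orthogonal-coordinate divergence formula to $V=(1-|\nabla\zeta|^{2})\nabla\zeta$, whose physical components in $(\mathbf{t},-\mathbf{n})$ are $V^{s}=(1-|\nabla\zeta|^{2})\zeta_{s}/\mathfrak{g}$ and $V^{t}=(1-|\nabla\zeta|^{2})\zeta_{t}$. Then
\begin{equation*}
0 \;=\; \nabla\cdot V \;=\; \frac{1}{\mathfrak{g}}\!\left[\partial_{s}\!\left(\frac{(1-|\nabla\zeta|^{2})\zeta_{s}}{\mathfrak{g}}\right)+\partial_{t}\!\left(\mathfrak{g}\,(1-|\nabla\zeta|^{2})\zeta_{t}\right)\right].
\end{equation*}
Restricting to $t=0$ and using $\partial_{s}\mathfrak{g}|_{t=0}=0$, $\partial_{t}\mathfrak{g}|_{t=0}=-\kappa$ together with the boundary condition gives
\begin{equation*}
0 \;=\; \partial_{s}\!\bigl((1-|\nabla\zeta|^{2})\zeta_{s}\bigr)\big|_{t=0}+\kappa j+\partial_{t}\!\bigl((1-|\nabla\zeta|^{2})\zeta_{t}\bigr)\big|_{t=0},
\end{equation*}
which is precisely \eqref{eq:135.2} after observing that $(1-|\nabla\zeta|^{2})|_{t=0}=\rho_{j}^{2}$ by \eqref{eq:29.5} and that differentiation in the tangential direction $s$ commutes with restriction to $t=0$.

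The only point that requires care---and which I would view as the main conceptual (rather than technical) obstacle---is the interpretation of the $t$-derivative on the last term: since $\rho_{j}^{2}$ is a priori only a function of $s$, the expression $\partial_{t}(\rho_{j}^{2}\zeta_{t})$ in \eqref{eq:135.2} should be understood as $\partial_{t}\bigl((1-|\nabla\zeta|^{2})\zeta_{t}\bigr)|_{t=0}$, i.e.\ the $\rho_{j}^{2}$-factor there stands for the ambient function $1-|\nabla\zeta|^{2}$ evaluated after differentiation. With that convention (which is needed anyway for the identity to be nontrivial), the proof reduces to these two short coordinate computations, without any further input beyond \eqref{eq:7} and \eqref{eq:29.5}.
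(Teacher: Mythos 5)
Your proposal is correct and takes essentially the same route as the paper: \eqref{eq:135.1} is obtained by differentiating $|\nabla\zeta|^2$ in the $(s,t)$-coordinates and setting $t=0$, and \eqref{eq:135.2} by writing the equation \eqref{eq:7} in these coordinates and using the boundary condition $\rho_j^2\,\partial_t\zeta|_{t=0}=-j$. Your use of the correct metric factor $1/\mathfrak{g}^2$ (rather than the $1/\mathfrak{g}$ appearing in the paper's computation, evidently a typo) in fact reproduces the stated coefficient $\kappa$ exactly, and your closing remark on interpreting $\partial_t(\rho_j^2\partial_t\zeta)$ as $\partial_t\bigl((1-|\nabla\zeta|^2)\partial_t\zeta\bigr)|_{t=0}$ matches how the identity is derived and used in the paper.
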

\begin{proof}
First, use \eqref{eq:gf} to observe that
\[\frac{\partial}{\partial t}|\nabla\zeta|^2=\frac{\partial}{\partial t}\left(\frac{1}{\gf}\left|\frac{\partial \zeta}{\partial s}\right|^2+\left|\frac{\partial \zeta}{\partial t}\right|^2\right)=\frac{\kappa}{\gf^2}\left|\frac{\partial \zeta}{\partial s}\right|^2+\frac{2}{\gf}\frac{\partial \zeta}{\partial s}\frac{\partial^2\zeta}{\partial t\partial s}+\frac{2}{\gf}\frac{\partial\zeta}{\partial t}\frac{\partial^2\zeta}{\partial t^2},\]
then \eqref{eq:135.1} follows by setting $t=0$. To establish \eqref{eq:135.2} note that
\begin{displaymath}
 \frac{\partial}{\partial
    s}\Big(\rho_j^2\frac{\partial\zeta}{\partial
    s}(s,0)\Big) = -\frac{1}{\gf}\frac{\partial}{\partial t}
  \Big(\gf(1-|\nabla\zeta|^2)\frac{\partial\zeta}{\partial t}\Big)_{t=0},
\end{displaymath}
by \eqref{eq:15}. Taking the derivative on the right hand side of this equation gives \eqref{eq:135.2}.
\end{proof}

We now set
\begin{displaymath}
  \tilde{\rho}_{i1} = \rho_{i1}+\frac{1}{2\rho_j}\frac{\partial}{\partial t}|\nabla\zeta|^2\big|_{(s,0)}\tau
\end{displaymath}
and, taking into account \eqref{eq:34}, obtain that 
\begin{subequations}
\label{eq:90}
\begin{empheq}[left={\empheqlbrace}]{alignat=2}
&-\tilde{\rho}^{\prime\prime}_{i1}
-\left(\rho_r^2-\frac{(\sigma_0\varphi_{i0}^\prime-j)^2}{\rho_{i0}^4}-3\rho_{i0}^2\right)\tilde{\rho}_{i1}+2 \rho_{i0}\left(\upsilon_{i0}^\prime+\frac{\partial\zeta}{\partial t}(s,0)\right)\upsilon_{i1}^\prime
& \notag \\ 
&  \qquad = \tilde{R}_\rho(s,\tau)\quad &\text{in } \R_+, \\
&  -\sigma_0\varphi_{i1}^{\prime\prime}+ \rho_{i0}^2\varphi_{i1} +
2\rho_{i0}\tilde{\rho}_{i1}\varphi_{i0}= \tilde{R}_\varphi
  \quad &\text{in } \R_+, \\
&
\left(\rho_{i0}^2\upsilon_{i1}^\prime\right)^\prime+2\left(\rho_{i0}\tilde{\rho}_{i1}\left[\upsilon_{i0}^\prime+\frac{\partial\zeta}{\partial
      t}(s,0)\right]\right)^\prime
-\sigma_0\varphi_{i1}^{\prime\prime}=\tilde{R}_\upsilon(s,\tau) \quad &\text{in } \R_+, \\ 
   & \tilde{\rho}_{i1}^\prime(0)=-\frac{1}{2\rho_j}\frac{\partial}{\partial t}|\nabla\zeta|^2\big|_{(s,0)}\,, &  \\
    & \varphi_{i1}^\prime(0)= 0, & \\
    & \upsilon^\prime_{i1}(0) = 0\,,  & 
\end{empheq}
\end{subequations}
where
\begin{multline*}
 \tilde{R}_\rho(s,\tau)=-\kappa\rho_{i0}^\prime- 2(\rho_{i0}-\rho_j)\tau
   \frac{\partial\zeta}{\partial t}(s,0)\frac{\partial^2\zeta}{\partial t^2}(s,0) - 2\rho_{i0}\tau\upsilon_{i0}^\prime\frac{\partial^2\zeta}{\partial t^2}(s,0) \\ 
 -2(\rho_{i0}-\rho_j)\left(\kappa\tau\frac{\partial\zeta}{\partial s}(s,0)+\tau\frac{\partial^2\zeta}{\partial s\partial t}(s,0) 
\right)\frac{\partial\zeta}{\partial s}(s,0)-2\rho_{i0}\frac{\partial\upsilon_{i0}}{\partial s}\frac{\partial\zeta}{\partial s}(s,0) \\
-\left(\rho_r^2-\frac{(\sigma_0\varphi_{i0}^\prime-j)^2}{\rho_{i0}^4}-3\rho_{i0}^2+2\rho_j^2\right)\frac{1}{2\rho_j}\frac{\partial}{\partial
  t}|\nabla\zeta|^2\big|_{(s,0)}\tau  \,,
\end{multline*}
\begin{displaymath}
  \tilde{R}_\varphi =  \sigma_0\kappa\varphi_{i0}^\prime + \frac{\rho_{i0}}{\rho_j}\varphi_{i0} \frac{\partial}{\partial t}|\nabla\zeta|^2\big|_{(s,0)}\tau \,,
\end{displaymath}
and
\begin{multline*}
  \tilde{R}_\upsilon(s,\tau)=\frac{\partial}{\partial s}\left(\left(\rho_j^2-\rho^2_{i0}\right) 
    \frac{\partial\zeta}{\partial s}(s,0)\right) \\ +\left(\left[\left(\rho_j^2-\rho_{i0}^2\right) \frac{\partial^2\zeta}{\partial t^2}(s,0)  
  + \frac{1}{\rho_j}\left(\rho_{i0}\upsilon_{i0}^\prime+\left(\rho_{i0}-\rho_j\right)\frac{\partial\zeta}{\partial t}(s,0)\right)\frac{\partial}{\partial
      t}|\nabla\zeta|^2\big|_{(s,0)}\right]\tau\right)^\prime.
\end{multline*}
Here we have used Lemma \ref{l:1} to express $\tilde{R}_\rho$ and $\tilde{R}_\varphi$ in terms of $\rho_{i0}-\rho_j$.

We now have the following
\begin{lemma}
  There exists a solution  $(\tilde{\rho}_{i1},\varphi_{i1},\upsilon_{i1})$ for
  \eqref{eq:90} such that
  \begin{equation}
\label{eq:91}
    e^{\gamma\tau}\left[\left|\tilde{\rho}_{i1}\right|+|\varphi_{i1}| + \left|\upsilon_{i1}^\prime\right|\right] \in L^2(\R_+) \,.  
  \end{equation}
\end{lemma}
\begin{proof}
  We begin by integrating (\ref{eq:90}c) to obtain using \eqref{eq:38} and \eqref{eq:87} that 
  \begin{equation}
  \label{eq:79.5}
     \rho_{i0}^2\upsilon_{i1}^\prime+2\rho_{i0}\tilde{\rho}_{i1}\left(\upsilon_{i0}^\prime+\frac{\partial\zeta}{\partial t}(s,0)\right) -\sigma_0\varphi_{i1}^{\prime} = F(\tau)\,,
  \end{equation}
where
\begin{multline*}
  F(\tau)=- \int_\tau^\infty  \frac{\partial}{\partial
    s}\Big((\rho_j^2-\rho^2_{i0})\frac{\partial\zeta}{\partial
    s}(s,0)\Big) d\tau^\prime \\ +\left[\left(\rho_j^2-\rho_{i0}^2\right) \frac{\partial^2\zeta}{\partial t^2}(s,0)  
  + \frac{1}{\rho_j}\left(\rho_{i0}\upsilon_{i0}^\prime+\left(\rho_{i0}-\rho_j\right)\frac{\partial\zeta}{\partial t}(s,0)\right)\frac{\partial}{\partial
      t}|\nabla\zeta|^2\big|_{(s,0)}\right]\tau\,.
\end{multline*}
Now, using the rescaling (\ref{eq:39}) and setting
\begin{displaymath}
  \varsigma(\eta)=\frac{1}{\rho_r}\tilde{\rho}_{i1}\left(\frac{\sigma_0^{1/2}\eta}{\rho_r}\right)-
  \frac{\sigma_0^{1/2}}{\rho_r}\left(\frac{\partial}{\partial
      t}|\nabla\zeta|^2\Big|_{(s,0)}\right)\frac{1}{\mu_j}e^{-\mu_j\eta}\,,\quad \psi(\eta) =
  \frac{\sigma_0^{1/2}}{\rho_r^2}\varphi_{i1}\left(\frac{\sigma_0^{1/2}\eta}{\rho_r}\right) \,, 
\end{displaymath}
we substitute $\upsilon_{i1}^\prime$ from \eqref{eq:79.5} into (\ref{eq:88}ab) and (\ref{eq:88}de)
yielding with the help of \eqref{eq:34} and \eqref{eq:35} the system 
\begin{equation}
\label{eq:92}
  \begin{cases}
 -\frac{1}{\sigma_0}\varsigma^{\prime\prime}+
 \Big(6\mu^2-4
 -\frac{3}{\sigma_0}\frac{\mu^{\prime\prime}}{\mu}\Big)\varsigma+2\frac{\vartheta^\prime-j_r}{\mu^3}\psi^\prime
= \frac{G(\rho_r^{-1}\sigma_0^{1/2}\eta)}{\rho_r^3} &\text{in } \R_+, \\
  -\psi^{\prime\prime}+ \mu^2\varphi_{i1} + 2\mu\vartheta\varsigma=
  \frac{\sigma_0^{1/2}}{\rho_r}\Big[-2\mu\vartheta\left(\frac{\partial}{\partial
      t}|\nabla\zeta|^2\Big|_{(s,0)}\right)\Big(\eta+\frac{1}{\mu_j}e^{-\mu_j\eta}\Big)+ \kappa\vartheta^\prime\Big]
    &\text{in } \R_+, \\
\varsigma^\prime(0)=0, &  \\
\psi^\prime(0)= 0. & 
  \end{cases}
\end{equation}
Here
\begin{multline}
\label{eq:93}
  G(\tau)=\tilde{R}_\rho - 2\frac{\sigma_0\varphi_{i0}^\prime-j_r}{\rho_{i0}^3}F(\tau) +\frac{\rho_r^2}{\sigma_0^{1/2}}\left(\frac{\partial}{\partial
      t}|\nabla\zeta|^2\Big|_{(s,0)}\right)\mu_je^{-\mu_j\rho_r\sigma_0^{1/2}\tau} \\
-\Big(6\rho_{i0}^2-4\rho_r^2-3\frac{\rho^{\prime\prime}_{i0}}{\rho_{i0}}\Big)\left(\frac{\partial}{\partial
    t}|\nabla\zeta|^2\Big|_{(s,0)}\right)\frac{\sigma_0^{1/2}}{\rho_r\mu_j}e^{-\mu_j\rho_r\sigma_0^{1/2}\tau}\,,
\end{multline}
and the linear operator acting on $(\varsigma,\psi)$ in \eqref{eq:92} is
precisely the same as the one acting on $(\partial\tilde\mu/\partial s,\partial\vartheta/\partial s)$
in \eqref{eq:75}. By \eqref{eq:83}, this operator must have a bounded
inverse for a sufficiently large $\sigma_0$. To complete the proof of
existence, we thus need to show that $G\in L^2(\R_+)$. This, however
can be easily shown since by \eqref{eq:38} and \eqref{eq:77} we have
that
\begin{displaymath}
 |F(\tau)| + |\tilde{R}_\rho| \leq Ce^{-\gamma\sigma_0^{-1/2}\tau} \,.
\end{displaymath}
The lemma is proved.
\end{proof}

\begin{remark}
  Repeating the procedure outlined in Lemma \ref{lem:diff-s} (cf. also
  Remark \ref{rem:2diff-s}), one can similarly obtain 
\begin{equation}
\label{eq:94}
  \Big\|\frac{\partial\rho_{i1}}{\partial s}\Big\|_{C^2(\R_+)} +
  \Big\|\frac{\partial\varphi_{i1}}{\partial s}\Big\|_{C^2(\R_+)}+ 
 \Big\|\frac{\partial^2\rho_{i1}}{\partial s^2}\Big\|_{C^2(\R_+)} +
 \Big\|\frac{\partial^2\varphi_{i1}}{\partial s^2}\Big\|_{C^2(\R_+)} \leq 
  C \,, 
\end{equation}
and that
\begin{equation}
\label{eq:95}
 \Big|\frac{\partial^2\tilde{\rho}_{i1}}{\partial s^2}\Big| +
 \Big|\frac{\partial^2\varphi_{i1}^\prime}{\partial s^2}\Big| + 
 \Big|\frac{\partial^2\tilde{\rho}_{i1}}{\partial s^2}\Big| +
 \Big|\frac{\partial^2\varphi_{i1}^\prime}{\partial s^2}\Big| \leq Ce^{-\gamma\eta}  \,. 
\end{equation}
\end{remark}
We can now set
\begin{equation}
\label{eq:107}
  \rho_i = \rho_{i0}+\epsilon\rho_{i1}, \quad   \varphi_i=\varphi_{i0}+\epsilon\varphi_{i1}, \quad  \upsilon_i = \upsilon_{i0}+\epsilon\upsilon_{i1}.
\end{equation}

We conclude this section by the following auxiliary estimate which
will be used in the next section.
\begin{lemma}
\label{lem:inner-solution-1-quadratic}
  Suppose that the outer solution satisfies 
  \begin{equation}
  \label{eq:beta}
  \|\nabla\zeta\|_\infty^2<\beta:=5/14-\sqrt{65}/70.
   \end{equation}
   Let
  \begin{subequations}
\label{eq:133}
    \begin{align}
&a(\eta) = -\rho_r^6\mu^2\vartheta^2& \\      
&b(\eta) =
\rho_r^2\mu^2[3\rho_r^2\mu^2-1-3(|\zeta_s(0,s)|^2+|\chi_{i0}^\prime|^2)]+\frac{1}{4}\rho_r^4\vartheta^2& \\
&c(\eta) =- \frac{1}{4}[3\rho_r^2\mu^2-1+|\zeta_s(0,s)|^2+|\chi_{i0}^\prime|^2] \,,&
    \end{align}
  \end{subequations}
where
\begin{equation}
\label{eq:chi-i0}
  \chi_{i0}^\prime=\rho_r\frac{\vartheta^\prime-j_r}{\mu^2}\,.
\end{equation}
Then, for a sufficiently large $\sigma_0$ the quadratic polynomial
$Q(z,\eta)=az^2+bz+c$ has two distinct real roots for all $\eta\in\R_+$. Denote these roots by 
$z_1(\eta,\nabla\zeta(0,s))$ and $z_2(\eta,\nabla\zeta(0,s))$ and suppose that $z_1<z_2$. If we set
\begin{equation}
  \label{eq:129}
z_m=\hspace{-1.5em}\sup_{
  \begin{subarray}{c}
    \eta\in\R_+ \\
    |\nabla\zeta(0,s)|^2 < \beta 
  \end{subarray}}
\hspace{-1.5em}z_1(\eta), \qquad 
z_M=\hspace{-1.5em}\inf_{
  \begin{subarray}{c}
    \eta\in\R_+ \\
    |\nabla\zeta(0,s)|^2 < \beta
  \end{subarray}}
\hspace{-1.5em}z_2(\eta),
\end{equation}
then $z_M>z_m$.
\end{lemma}
\begin{proof}
For convenience we set
\begin{equation}
\label{eq:139}
  P^2 =|\zeta_s(0,s)|^2+|\chi_{i0}^\prime|^2\,.
\end{equation}
By \eqref{eq:rhors},\eqref{eq:chi-i0} and \eqref{eq:139} we have 
\begin{equation}
\label{eq:est-dif}
\rho_r^2\mu^2-(1-P^2)=\rho_r^2\big(\mu^2-1+\frac{(\vartheta^\prime-j_r)^2}{\mu^4}\big).
\end{equation}
Using \eqref{eq:41} and (\ref{eq:73}) in \eqref{eq:est-dif} we may conclude the existence of
$C>0$ such that for a sufficiently large $\sigma_0$ we have 
\begin{equation}
\label{eq:dif1P2}
  \|\rho_r^2\mu^2-(1-P^2)\|_\infty \leq \frac{C}{\sigma_0}\,.
\end{equation}
In the sequel we will use the notation $A\cong B$ to indicate that $|A-B|\leq \frac{C}{\sigma_0}$. From \eqref{eq:dif1P2} we get that 
 $\rho_r^2\mu^2\cong 1-P^2$ and use \eqref{eq:133} to get that 
  \begin{subequations}
  	\label{eq:subeq}
 \begin{align}
 a&\cong -\rho_r^4\vartheta^2(1-P^2),\\
 b&\cong (1-P^2)(3(1-P^2)-1-3P^2)+\frac{1}{4}\rho_r^4\vartheta^2=(1-P^2)(2-6P^2)+\frac{1}{4}\rho_r^4\vartheta^2,\\
 c&\cong  -\frac{1}{4}\big(3(1-P^2)-(1-P^2)\big)=-\frac{1}{2}(1-P^2).
 \end{align}
\end{subequations}
  Using \eqref{eq:subeq} it can be easily verified that
  \begin{equation}
\label{eq:134}
    \left|b^2 - 4ac -([2(1-3P^2)(1-P^2)]^2 - \rho_r^4\vartheta^2(1-P^2)(1+P^2)+ \rho_r^8\vartheta^4/16)\right| \leq \frac{C}{\sigma_0}\,.
  \end{equation}
Consequently, whenever
\begin{equation}
\label{eq:138}
  \inf_{\eta\in\R_+}\big([2(1-3P^2)]^2(1-P^2)- \rho_r^4\vartheta^2(1+P^2)\big) >0
\end{equation}
then $Q(z)$ would have two distinct roots for a sufficiently large $\sigma_0$. 

By the monotonicity of $\vartheta$ and the fact that $\mu>\mu_j,$ we obtain that 
\begin{equation}
\label{eq:P2lt}
  P^2 \leq 1-\rho_r^2 + \rho_r^2\frac{j_r^2}{\mu_j^4} = 1-\rho_j^2 =
  |\nabla\zeta(0,s)|^2 \,.
\end{equation}
 Above we used the relation 
 \begin{equation}
 \label{eq:rhojr}
 \rho_j=\rho_r\mu_j
 \end{equation}
 that it is expected to hold in view of \eqref{eq:39} and the fact
 that $\rho_j$ is the asymptotic limit of $\rho_{i0}$ while $\mu_j$ is the
 asymptotic limit of $\mu$ (which follows from Lemma
 \ref{lem:inner-solution-exist} and Proposition \ref{lem:2.2}). One can
 also verify \eqref{eq:rhojr} directly from \eqref{eq:36},
 \eqref{eq:39}, and (\ref{eq:43}).
 
 Next, by \eqref{eq:bef90},\eqref{eq:43}  and \eqref{eq:rhojr}    it follows that
\begin{displaymath}
  \rho_r^4\vartheta^2 \leq \rho_r^4\frac{j_r^2}{\mu_j^2}=\rho_j^2(\rho_r^2-\rho_j^2)\leq\rho_j^2(1-\rho_j^2)\,.
\end{displaymath}
From the above we deduce that \eqref{eq:138} is satisfied if 
\begin{displaymath}
  (1-\rho_j^2) (2-\rho_j^2)< 4(3\rho_j^2-2)^2\,,
\end{displaymath}
which clearly holds whenever
\begin{equation}
\label{eq:rho-beta} 
 1-\rho_j^2 <\beta,
\end{equation}
see \eqref{eq:beta}.
Let then $z_1(\eta)$ and $z_2(\eta)$ denote the roots of $Q$, where
$z_1<z_2$. Then,
\begin{displaymath}
  z_1=-\frac{b}{2a}\Big[1-\sqrt{1-\frac{4ac}{b^2}}\Big]\leq -\frac{2c}{b}\,.
\end{displaymath}
Hence, since $P^2<\beta$ (by \eqref{eq:P2lt} and \eqref{eq:rho-beta}), we obtain
\begin{equation}
\label{eq:137}
  z_m \leq\sup_{\eta\in\R_+}\frac{1}{2(1-3P^2)}+O(\frac{1}{\sigma_0})<\frac{1}{2(1-3\beta)}+O(\frac{1}{\sigma_0}) \,.
\end{equation}
In a similar manner we obtain
\begin{displaymath}
  z_2 =-\frac{b}{2a}\Big[1+\sqrt{1-\frac{4ac}{b^2}}\Big]=-\frac{b}{a}-\left(\frac{-b}{2a}\right)\Big[1-\sqrt{1-\frac{4ac}{b^2}}\Big]\geq -\frac{b}{a}+\frac{2c}{b}\,.
\end{displaymath}
Hence, for some $C>0$ and sufficiently large $\sigma_0$,
\begin{equation}
\label{eq:136}
  z_M \geq \frac{1}{4\rho_j^2} + \frac{2-6P^2}{\rho_r^4\vartheta^2} - z_m -\frac{C}{\sigma_0}\,.
\end{equation}

Standard comparison arguments show that
\begin{displaymath}
  \vartheta\geq-j_re^{-\eta}\,.
\end{displaymath}
Consequently,
\begin{displaymath}
  \vartheta^\prime= -\int_\eta^\infty\mu^2\vartheta\,d\tilde\eta \leq \mu_j^2j_re^{-\eta}
\end{displaymath}
We now use the above together with \eqref{eq:bef90} to obtain that
\begin{equation}
\label{eq:inequality}
  \frac{2-6P^2}{\vartheta^2} \geq
  \frac{2-6[|\zeta_s(0,s)|^2+j^2_r\mu_j^{-4}\rho_r^2(1-\mu_j^2e^{-\eta})^2]}{j_r^2\mu_j^{-2}e^{-2\mu_j\eta}}\,.
\end{equation}
Next we claim that  the right-hand-side of \eqref{eq:inequality}
 is monotone increasing with respect to $\eta$. Indeed, setting 
\begin{equation}
	\label{eq:AB}
A= \frac{2-6|\zeta_s(0,s)|^2}{j_r^2\mu_j^{-2}}~\text{ and }~B=6\mu_j^{-2}\rho_r^2,
\end{equation}
 we rewrite it as
 \begin{equation}
 \label{eq:feta}
 f(\eta):=
 Ae^{2\mu_j\eta}-Be^{2\mu_j\eta}(1-\mu_j^2e^{-\eta})^2=
 (A-B)e^{2\mu_j\eta}+\mu_j^2B\Big[2e^{(2\mu_j-1)h}-\mu_j^2e^{-2(1-\mu_j)h}\Big].  
\end{equation}
Now we note that $A>B$. To see this, we use
\eqref{eq:43},\eqref{eq:rhors} and \eqref{eq:rhojr} to write
  \begin{multline}
  \label{eq:A-B}
  j_r^2\mu_j^{-2}(A-B)=2-6|\zeta_s(0,s)|^2-6j_r^2\mu_j^{-4}\rho_r^2\\=2-6(1-\rho_r^2)-6(1-\mu_j^2)\rho_r^2=6\mu_j^2\rho_r^2-4=6\rho_j^2-4\geq 2-6\beta>0.
  \end{multline}
  Here we used that $\rho_j^2=1-|\nabla\zeta(0,s)|^2\geq 1-\beta$ by \eqref{eq:29.5} and
  \eqref{eq:rho-beta}, and finally \eqref{eq:beta}.  In particular it
  follows that $A>B$.  The monotonicity of $f(\eta)$ follows by using the
  inequalities $A>B$ and $\mu_j<1<2\mu_j$ in \eqref{eq:feta}.
  \par By the above and \eqref{eq:inequality} we get that 
\begin{multline}
\label{eq:gtf0}
  \frac{2-6P^2}{\vartheta^2} \geq f(0)=\big(2-6[|\zeta_s(0,s)|^2+j^2_r\mu_j^{-4}\rho_r^2(1-\mu_j^2)^2]\big)\frac{\mu_j^2}{j^2_r}\\
    =A-B+6\rho_r^2(2-\mu_j^2).
\end{multline}
By \eqref{eq:gtf0} and \eqref{eq:A-B} we obtain that 
\begin{equation}
\label{eq:2-6P}
 \frac{2-6P^2}{\rho_r^4\vartheta^2} \geq\frac{6}{\rho_r^2}(2-\mu_j^2)\geq 6,
\end{equation}
 where in the last inequality we used the inequalities  $\rho_r^2\leq1$ and $\mu_j\leq1$.
 
 Finally, substituting \eqref{eq:2-6P} into \eqref{eq:136} yields,
 with the help of \eqref{eq:137},
\begin{displaymath}
 z_M-z_m\geq 6-\frac{1}{1-3\beta}-\frac{C}{\sigma_0}=\frac{5-18\beta}{1-3\beta}-\frac{C}{\sigma_0}>0
\end{displaymath}
 for sufficiently  large $\sigma_0$, since $\beta<5/18<1/3$ by \eqref{eq:beta}. 
\end{proof}

\section{Uniform approximation}
\label{sec:unif-appr}

We can now construct an approximate solution for \eqref{eq:6}. To this end,
let $\Upsilon\in C^\infty(\R,[0,1])$ denote a cutoff function satisfying 
\begin{equation}
\label{eq:96}
  \Upsilon(x) =
  \begin{cases}
    1, & x<\frac{1}{2}, \\
    0, & x>1,
  \end{cases}
\qquad \text{ where } |\Upsilon^\prime|\leq 4 \,.
\end{equation}
Let, for some $x\in\Omega$, $t=d(x,\partial\Omega)$, $\tau=t/\epsilon$, and $s$ denote the
arclength calculated from some fixed initial point on $\partial\Omega$ to the
projection of $x$ on $\partial\Omega$ in the counterclockwise direction.
Further, let $\delta=\epsilon^{\iota}$ for some $0<\iota<1$ and set, for all $(x,y)\in\Omega$,
\begin{subequations} 
\label{eq:97}
\begin{equation}
\phi_0(x,y) =  \frac{1}{\epsilon^2}\varphi_i\big(s,\tau\big)\Upsilon(t/\delta)
\end{equation}
\vspace*{1.5ex}
\begin{equation}
\rho_0(x,y) = \rho_o + \big[\rho_i\big(s,\tau\big) - 
\rho_a(s,t)\big]\Upsilon(t/\delta)  \,,
\end{equation}
and
\begin{equation}
\chi_0(x,y) = \chi_o  + \upsilon_i\big(s,\tau\big)\Upsilon(t/\delta).
\end{equation}
Here
\begin{equation}
\label{eq:102}
  \rho_a(s,t) =\left[1-|\nabla\zeta(s,0)|^2\right]^{1/2} +
  \left.\left(\frac{\partial}{\partial t}\left[1-|\nabla\zeta(s,t)|^2\right]^{1/2}\right)\right|_{t=0} t \,,
\end{equation}
while $\rho_o$ and $\chi_o$ are as defined in \eqref{eq:17.5}.
\end{subequations}
We shall attempt to prove Theorem \ref{thm:stationary2} using the
Banach fixed point theorem. To this end we set
$$(\rho_1,\phi_1,\chi_1)=(\rho-\rho_0,\epsilon^2[\phi-\phi_0],\epsilon[\chi-\chi_0]),$$ and
\begin{equation}
  \label{eq:106}
(\phi_0,\chi_0)=(\epsilon^{-2}\tilde{\phi_0},\epsilon^{-1}\tilde{\chi}_0).
\end{equation}
We then
rewrite \eqref{eq:7} in the following form
  \begin{subequations}
\label{eq:98}
\begin{empheq}[left={\empheqlbrace}]{alignat=2}
  -\Big(\Delta +\frac{1}{\epsilon^2}(1-&3\rho_0^2-|\nabla\tilde{\chi}_0|^2)\Big)\rho_1 +
  \frac{2\rho_0}{\epsilon^2}\nabla\tilde\chi_0\cdot\nabla\chi_1
  \notag   \\ & = -h_1 -  \frac{1}{\epsilon^2}\Big[|\nabla\chi_1|^2\rho_0 +  2\rho_1\nabla\tilde{\chi}_0\cdot\nabla\chi_1  \nonumber \\ & + |\nabla\chi_1|^2\rho_1+(3\rho_0+\rho_1)\rho_1^2\Big]  \\
  \Div(\rho_0^2\nabla\chi_1) + & 2\Div \big(\rho_1\rho_0\nabla\tilde{\chi}_0\big) - \sigma_0\epsilon\Delta\phi_1
  \notag \\ & = -\epsilon \Div H_2 - \Div\big(\rho_1^2\nabla\tilde{\chi}_0\big)  - \Div \big(\rho_1(2\rho_0+\rho_1)\nabla\chi_1\big) \\[1.5ex]
  \sigma_0\epsilon^2\Delta\phi_1 -\rho_0^2&\phi_1 - 2\rho_0\tilde{\phi}_0\rho_1 = -\epsilon^2h_3 +
  \rho_1(2\rho_0+\rho_1)\phi_1 + \rho_1^2\tilde{\phi}_0 \,.
\end{empheq}
where
\begin{align}
  & h_1 = -\Delta\rho_0 - \frac{1}{\epsilon^2}(1-|\nabla\tilde{\chi}_0|^2-\rho_0^2)\rho_0 \\
  & H_2 = \frac{1}{\epsilon}\rho_0^2\nabla\tilde\chi_0 - \sigma_0\nabla\tilde\phi_0 \\
  & h_3 = \sigma_0\Delta\tilde\phi_0- \frac{1}{\epsilon^2}\rho_0^2\tilde{\phi}_0 \,.
\end{align}
  \end{subequations}

The first step towards establishing Theorem \ref{thm:stationary2} is
to show that $h_1$, $H_2$, and $h_3$ are small in the limit $\epsilon\to0$.
\begin{lemma}
   Let
 \begin{displaymath}
   \Omega_r = \{ x\in\Omega\,|\, d(x,\partial\Omega)\geq r\,\}\,.
 \end{displaymath}
Then,
\begin{subequations}
\label{eq:49}
  \begin{align}
    &   \|h_1\|_2 \leq C\epsilon^{\iota/2-2(1-\iota)}  \,,& \\
 & \|\Div H_2\|_{L^2(\Omega_\delta)} \leq C\epsilon^2  \,, &\\
  &  \|H_2\|_{L^\infty (\overline{\Omega\setminus\Omega_\delta)}} \leq C_\iota\epsilon^{1+\iota/2-2(1-\iota)}&
  \\
   &   \|\Div H_2\|_2 \leq C_\iota\epsilon^{\iota/2-2(1-\iota)} \,,  & \\
& \|h_3\|_2 \leq C_\iota\epsilon^{\iota/2-2(1-\iota)}\,,& \\
& \|\Delta\tilde{\chi}_0\|_2 \leq \frac{C}{\epsilon^{1/2}} \,.
\end{align}
\end{subequations}
\end{lemma}
\begin{proof}
  {\em (i).} We begin by seeking an estimate for $\|h_1\|_2$.  
By \eqref{eq:38}, \eqref{eq:91}, and \eqref{eq:95} we
have that for all $x\in\Omega_{\delta}$
\begin{displaymath}
  h_1=g_1 + \OO\big(e^{-\epsilon^{-(1-\iota)}}\big) \,,
\end{displaymath}
where $g_1$ is given by \eqref{eq:26}.
Hence, by \eqref{eq:27} we have
\begin{equation}
\label{eq:99}
  \|h_1\|_{L^2(\Omega_\delta)} \leq C\epsilon^2  \,.
\end{equation}

Next, we estimate $h_1$ in a $\delta$-neighborhood of $\partial\Omega$. In this
neighborhood we may write
\begin{equation}
\label{eq:100}
  h_1= -\Delta\rho_i-\frac{\rho_i}{\epsilon^2}\left[1-\rho_i^2-\epsilon^2\left|\nabla\upsilon_i+\frac{1}{\epsilon}\nabla\zeta_a\right|^2\right]+\tilde{h}_1 \,,
\end{equation}
where
\begin{multline*}
  \tilde{h}_1 =
  -\Delta(\rho_o-\rho_a)-\frac{\rho_o-\rho_a}{\epsilon^2}\left[1-3\rho_i^2-3\rho_i(\rho_o-\rho_a)-(\rho_o-\rho_a)^2-\epsilon^2|\nabla\left(\chi_o+\upsilon_i\right)|^2\right]\\{+}2\rho_i\nabla\left(\chi_o-\frac{\zeta_a}{\epsilon}\right)\cdot\nabla\left(\upsilon_i+\frac{1}{\epsilon}\zeta_a\right){+}
 \rho_i\left|\nabla\left(\chi_o-\frac{\zeta_a}{\epsilon}\right)\right|^2,
\end{multline*}
and
\begin{equation}
 \zeta_a = \zeta(s,0)+t\frac{\partial\zeta}{\partial t}(s,0)+\frac{t^2}{2}\frac{\partial^2\zeta}{\partial
   t^2}(s,0)\,.
\end{equation}
Since $\rho_a$ and $\zeta_a$ are the respective Taylor expansion of $\rho_{out,o}$
and $\chi_{out,0}$ near the boundary, it easily follows that
\begin{equation}
  \label{eq:101}
\| \tilde{h}_1 \|_{L^\infty (\Omega\setminus\Omega_\delta)} \leq C\epsilon^{-2(1-\iota)} \,.
\end{equation}

To complete the estimate of $h_1$ it remains necessary to bound
$h_1-\tilde{h}_1$ in \eqref{eq:100}.  To this end we note first that
\begin{displaymath}
  \Delta\rho_i =
  \Big(\frac{1}{\gf}\frac{\partial}{\partial s}\Big)^2\rho_i 
+\frac{1}{\epsilon^2}\rho_i^{\prime\prime} - \frac{\kappa}{\epsilon\gf}\rho_i^\prime  =
\frac{1}{\epsilon^2}\rho_{i0}^{\prime\prime} + \frac{1}{\epsilon}\Big[\frac{\kappa}{\gf}\rho_{i0}^\prime+\rho_{i1}^{\prime\prime}\Big]+\OO(1)\,.
\end{displaymath}
Furthermore, by (\ref{eq:97}c)
\begin{multline*}
\epsilon^2|\nabla\chi_i|^2 = |\chi_i^\prime|^2 + \frac{\epsilon^2}{\gf^2}\Big|\frac{\partial\chi_i}{\partial s}\Big|^2=
|\chi_{i0}^\prime|^2 + \Big|\frac{\partial\zeta(s,0)}{\partial s}\Big|^2 \\
+2\epsilon\bigg[\chi_{i0}^\prime\chi_{i1}^\prime+\bigg(\frac{\partial^2\zeta}{\partial t\partial s}(s,0)\tau+\int_\tau^\infty\Big[\frac{\partial\chi_{i0}^\prime(\tau^\prime,s)}{\partial s} -
\frac{\partial^2\zeta}{\partial t\partial s}(s,0)\Big]\,d\tau^\prime\bigg)\frac{\partial\zeta(s,0)}{\partial s} \\ +\kappa\tau \Big|\frac{\partial\zeta(s,0)}{\partial s}\Big|^2\bigg] 
+ \OO(\delta^2) \,.
\end{multline*}
Hence, by \eqref{eq:38} \eqref{eq:35}, \eqref{eq:34}, \eqref{eq:88},
and \eqref{eq:91} we obtain that
\begin{displaymath}
   \Big|-\Delta\rho_i-\frac{\rho_i}{\epsilon^2}[1-\rho_i^2-\epsilon^2|\nabla\chi_i|^2]\Big| \leq C\epsilon^{-2(1-\iota)}
  \quad \forall x\in\Omega\setminus\Omega_\delta \,.
\end{displaymath}
Combining the above with \eqref{eq:100} and \eqref{eq:101} yields
(\ref{eq:49}a). 

\medskip\noindent
{\em (ii).} Next, we obtain estimates on $\|H_2\|_2$, where $H_2$ is given by
(\ref{eq:98}e). We have that for all $x\in\Omega_{\delta}$ 
\begin{displaymath}
  \Div H_2=g_2 + \OO\big(e^{-\epsilon^{-(1-\iota)}}\big) \,,
\end{displaymath}
where $g_2$ is given by \eqref{eq:28}.  Hence, we may conclude (\ref{eq:49}b).
\begin{equation}
\label{eq:103}
  \|\Div H_2\|_{L^2(\Omega_\delta)} \leq C\epsilon^2  \,.
\end{equation}

Setting
\begin{equation}
\label{eq:104}
  H_2 = \rho_i^2\nabla\chi_i - \sigma_0\nabla\varphi_i +\tilde{H}_2 
\end{equation}
in $\Omega\setminus\Omega_\delta$, where
\begin{displaymath}
  \tilde{H}_2 = \rho_0^2\nabla\Big(\chi_o - \frac{\zeta_a}{\epsilon}\Big) + \big[\rho_0^2-\rho_i^2\big]\nabla\chi_0 \,,
\end{displaymath}
we deduce the estimate 
\begin{equation}
\label{eq:105}
\| \tilde{H}_2 \|_{L^\infty (\Omega\setminus\Omega_\delta)} \leq C\epsilon^{1-2(1-\iota)} \,.
\end{equation}
It follows that
\begin{displaymath}
  |\rho_i^2\nabla\chi_i - \sigma_0\Delta\varphi_i|  \leq C\epsilon^{1-2(1-\iota)}
  \quad \forall x\in\Omega\setminus\Omega_\delta \,,
\end{displaymath}
which together with \eqref{eq:105} yields (\ref{eq:49}c). Furthermore,
with the aid of (\ref{eq:49}b) we obtain (\ref{eq:49}d).

\medskip\noindent
{\em (iii).} We now derive an estimate for $\|h_3\|_2$, where $h_3$ is given by
(\ref{eq:98}f).  By \eqref{eq:97} we readily obtain that for $x\in\Omega_\delta$
we have
\begin{displaymath}
  h_3=\OO(e^{-\epsilon^{-(1-\iota)}}) \,.
\end{displaymath}
In $\Omega\setminus\Omega_\delta$ it can be easily verified that 
\begin{displaymath}
  h_3 = \sigma_0\Delta\varphi_i - \rho_i^2\varphi_i + (\rho_o-\rho_a)(\rho_o-\rho_a+2\rho_i)\varphi_i \,.
\end{displaymath}
Once again, the same argument used to derive (\ref{eq:49}a), yields
(\ref{eq:49}e).

\noindent{\em (iv).} To conclude the proof we derive an estimate for
$\|\Delta\tilde{\chi}_0\|_2$. Clearly,
\begin{displaymath}
  \rho_0^2\Delta\tilde{\chi}_0 =  \Div (\rho_0^2\nabla\tilde{\chi}_0) -
  2\rho_0\nabla\rho_0\cdot\nabla\tilde{\chi}_0 \,. 
\end{displaymath}
By (\ref{eq:33}c), (\ref{eq:141}), (\ref{eq:91}), and (\ref{eq:19}) we
obtain that, for some $\gamma>0$, 
\begin{equation}
\label{eq:144}
  |\Div (\rho_0^2\nabla\tilde{\chi}_0)|\leq \frac{C}{\epsilon}e^{-\frac{\gamma t}{\epsilon}} \,,
\end{equation}
and hence
\begin{displaymath}
  \|\Div (\rho_0^2\nabla\tilde{\chi}_0)\|_2\leq \frac{C}{\epsilon^{1/2}} \,. 
\end{displaymath}
As $\|\nabla\tilde{\chi_0}\|_2\leq2/\sqrt{3}$ for sufficiently small $\epsilon$, we
obtain that
\begin{displaymath}
  \|2\rho_0\nabla\rho_0\cdot\nabla\tilde{\chi}_0\|_2 \leq C\|\nabla\rho_0\|_2 \,.
\end{displaymath}
By Proposition \ref{prop:exist}, Lemma \ref{lem:outer-next-order},
(\ref{eq:141}), and (\ref{eq:91}) we obtain that, for some $\gamma>0$, 
\begin{equation}
\label{eq:145}
  |\nabla\rho_0|\leq C\Big(\frac{1}{\epsilon}e^{-\frac{\gamma t}{\epsilon}}+1\Big) \,. 
\end{equation}
and hence
\begin{equation}
\label{eq:108}
  \|\nabla\rho_0\|_2\leq\frac{C}{\epsilon^{1/2}} \,,
\end{equation}
yielding (\ref{eq:49}f).
\end{proof}

\begin{proof}[Proof of Theorem \ref{thm:stationary2}]
 Let  $\Hg$ be defined by 
\begin{displaymath}
   \Hg = \big\{ (\eta,\omega,\varphi)\in H^2(\Omega,\R^3)\,\big| \,
   (\nabla\eta,\nabla\omega,\nabla\varphi)\cdot{\bf n}\big|_{\partial\Omega}=0 \,;\, (\omega)_\Omega=0
    \big\} \,.
 \end{displaymath}
We equip $\Hg$ with the
norm
\begin{equation}
\label{eq:109}
   \|(\eta,\varphi,\omega)\|_\Hg = \|\eta\|_\infty + 
   \|\eta\|_{1,2} +  \|\varphi\|_{1,2}
+ \|\omega\|_{1,2}+ \epsilon( \|D^2\eta\|_2 +  \|D^2\varphi\|_2)+ \|D^2\omega\|_2 \,.
\end{equation}
Suppose that $(\eta,\omega,\varphi)\in\Hg$ is a solution of
\begin{subequations}
\label{eq:110}
\begin{empheq}[left={\empheqlbrace}]{alignat=2}
   &   -\Big(\Delta-\frac{1}{\epsilon^2}(3\rho_0^2-1+ |\nabla\tilde{\chi}_0|^2)\Big)\eta
    + \frac{2\rho_0}{\epsilon^2}\nabla\tilde{\chi}_0\cdot\nabla\omega  = f_1 & \quad
     \text{in } \Omega\,, \\
&-\Div(\rho_0^2\nabla\omega) - 2\Div \big(\eta\rho_0\nabla\tilde{\chi}_0\big) + \sigma_0\epsilon\Delta\varphi = f_2
& \quad
     \text{in } \Omega\,, \\
&   -\sigma_0\epsilon^2\Delta\varphi +\rho_0^2\varphi + 2\rho_0\tilde{\phi}_0\eta = f_3 & \quad
     \text{in } \Omega\,,
\end{empheq} 
\end{subequations}
where $(f_1,f_2,f_3)\in L^2(\Omega,\R^3)$. We split the remainder of the proof
into three steps.

\medskip\noindent{\em Step 1:} {\em Prove that $v=(\eta,\omega,\varphi)$ is well-defined.} To this end we
use the Lax-Milgram lemma. Let $w=(\tilde{\eta},\tilde{\omega},\tilde{\varphi})$,
and
\begin{displaymath}
  \Vg=\{ (\eta,\omega,\varphi)\in H^1(\Omega,\R^3)\,|\, (\omega)_\Omega=0\}\,. 
\end{displaymath}
Then define the bilinear form $B:\Vg\times\Vg\to\R$
\begin{multline}
\label{eq:111}
  B[v,w] = \langle\nabla\eta,\nabla\tilde{\eta}\rangle +
  \frac{1}{\epsilon^2}\Big\langle\Big(3\rho_0^2-1+|\nabla\tilde{\chi}_0|^2)\Big)\eta,\tilde{\eta}\Big\rangle + 
\frac{2}{\epsilon^2}\langle\rho_0\nabla\omega,\tilde{\eta}\nabla\tilde{\chi}_0\rangle +\\
 \frac{1}{\epsilon^2}\Big[  \langle\rho_0\nabla\omega,\rho_0\nabla\tilde{\omega}\rangle+2\langle\rho_0\nabla\tilde{\omega},\eta\nabla\tilde{\chi}_0\rangle 
  - \sigma_0\epsilon\langle\nabla\tilde{\omega},\nabla\varphi\rangle +\\ C_0\sigma_0\big( \sigma_0\epsilon^2\langle\nabla\varphi,\nabla\tilde{\varphi}\rangle +
  \langle\rho_0\tilde{\varphi},(\rho_0\varphi+2\eta\tilde{\phi}_0)\rangle\big)
\Big]\,,
\end{multline}
where $C_0>0$ is to be specified later. Since by (\ref{eq:97}) both $\tilde{\chi}_0$ and
$\tilde{\phi}_0$ are in $C^1(\Omega)$, it readily follows from the Sobolev
embeddings that there exists $C(\Omega,\epsilon)$ such that
\begin{displaymath}
  | B[v,w]| \leq C \|v\|_{1,2}\|w\|_{1,2} \,.
\end{displaymath}
By (\ref{eq:97}b) and Proposition \ref{lem:2.2} we have, for a sufficiently
small value of $\epsilon$,
\begin{displaymath}
\|1-\rho_0^2\|_\infty<\frac{1}{3} \,,
\end{displaymath}
and consequently, again for a sufficiently small value of $\epsilon$,
\begin{equation}
  \label{eq:112}
1<3\rho_0^2-1 < 3 \,.
\end{equation}

We next attempt to estimate $B(v,v)$ from below. Clearly,
\begin{multline*}
    B[v,v] = \|\nabla\eta\|_2^2 + \frac{1}{\epsilon^2}\|(3\rho_0^2-1)^{1/2}\eta\|_2^2
    +  \frac{1}{\epsilon^2}\Big[\|\eta\nabla\tilde{\chi}_0\|_2^2 +\|\rho_0\nabla\omega\|_2^2\\ +
    4\langle\rho_0\nabla\omega,\eta\nabla\tilde{\chi}_0\rangle - \sigma_0\epsilon\langle\nabla\omega,\nabla\varphi\rangle
    + C_0\sigma_0\big(\sigma_0\epsilon^2\|\nabla\varphi\|_2^2 +
    \langle\rho_0\varphi,(\rho_0\varphi+2\eta\tilde{\phi}_0)\rangle\big)\Big] \,.
\end{multline*}
 Let
\begin{displaymath}
  W=
  \begin{bmatrix}
    \epsilon^{-1}\eta \\
    \epsilon^{-1}\nabla\omega \\
    \sigma_0\nabla\varphi \\
    \epsilon^{-1}\varphi
  \end{bmatrix}\,,
\end{displaymath}
and
\begin{displaymath}
  M(x,j,\sigma_0,\epsilon,C_0)=
  \begin{bmatrix}
    3\rho_0^2 -1 + |\nabla\tilde{\chi}_0|^2 & 2\rho_0\nabla\tilde{\chi}_0 & 0 &
    C_0\sigma_0\rho_0\tilde{\phi}_0 \\
2\rho_0\nabla\tilde{\chi}_0 & \rho_0^2 & -\frac{1}{2} & 0 \\
0 & -\frac{1}{2} & C_0 & 0 \\
C_0\sigma_0\rho_0\tilde{\phi}_0 & 0 & 0 & C_0\sigma_0\rho_0^2
  \end{bmatrix}
\,.
\end{displaymath}
It can be easily verified that
\begin{displaymath}
   B[v,v] = \|\nabla\eta\|_2^2 + \int_\Omega W^tMW \,dx \,.
\end{displaymath}
Consequently, if we can show that for some $C_0>0$, $\inf_{x\in\Omega}\min \sigma(M)>\lambda_0>0$ (obviously
$\sigma(M)\subset\R$),  where $\lambda_0$ is independent of $\epsilon$,
coercivity of $B$ would follow. 

Let $j=\gamma j_R$ for some $j_R\in C^{3,\alpha}(\partial\Omega)$. For $\gamma \equiv0$ we have 
\begin{displaymath}
  M(x,0,\sigma_0,\epsilon,C_0)=
  \begin{bmatrix}
    2  & 0 & 0 &
    0 \\
0 & 1 & -\frac{1}{2} & 0 \\
0 & -\frac{1}{2} & C_0 & 0 \\
0& 0 & 0 & C_0 \sigma_0
  \end{bmatrix}
\end{displaymath}
Note that $\min \sigma(M(x,0,\sigma_0,\epsilon))>\lambda_0>0$ whenever $C_0>1/4$. We seek
$C_0(\gamma)$, satisfying $C_0(0)>1/4$, such that ${\rm
  det }\,M(x,0,\sigma_0,\epsilon,C_0(\gamma))>D_0>0$ for all $\gamma\in(0,\gamma_0)$. Here
$D_0$ is independent of $\epsilon$ and $\gamma_0$ is such that the solution of
\eqref{eq:15} with $j=\gamma_0j_R$ satisfies $\|\nabla\zeta\|_\infty^2<5/14-\sqrt{65}/70$. We note
that all elements of $M$ are uniformly bounded in $L^\infty(\Omega)$ as
$\epsilon\to0$, and hence $\sup_{x\in\Omega}\max \sigma(M)$ is also bounded.
Consequently, if such $C_0(\gamma)$ is found, it would follow that
$\inf_{x\in\Omega}\min \sigma(M)>\lambda_0>0$, for $\lambda_0$ which is independent of
$\epsilon$. 

We now write
\begin{displaymath}
  {\rm   det}\,M = C_0\sigma_0\rho_0^2\Big( -C_0^2\rho_0^2\sigma_0\tilde{\phi}_0^2
  +C_0\Big(\rho_0^2\big[3\rho_0^2 -1 -3|\nabla\tilde{\chi}_0|^2\big]
  +\frac{1}{4}\sigma_0\tilde{\phi}_0^2\Big)-\frac{1}{4}(3\rho_0^2 -1 +
  |\nabla\tilde{\chi}_0|^2)\Big) 
\end{displaymath}
Inside the boundary layer, where $d(x,\partial\Omega)<\delta$ , keeping $\sigma_0$ large
but fixed, we have  by (\ref{eq:97}b), (\ref{eq:102}),
(\ref{eq:107}), (\ref{eq:17.5}), and (\ref{eq:39})
\begin{displaymath}
  \rho_0\xrightarrow[\epsilon\to0]{}\rho_{i0}=\rho_r\mu\,.
\end{displaymath}
Similarly,  by (\ref{eq:106}), (\ref{eq:97}c), (\ref{eq:17.5}),
(\ref{eq:107}), (\ref{eq:chi-i0}), (\ref{eq:34}), (\ref{eq:39}), and
(\ref{eq:P2lt}) 
\begin{displaymath}
   |\nabla\tilde{\chi}_0|\xrightarrow[\epsilon\to0]{} P\,,
\end{displaymath}
 where $P$ is given in \eqref{eq:139}. 
Finally, by (\ref{eq:106}),  (\ref{eq:97}a), (\ref{eq:107}), and (\ref{eq:39})
\begin{displaymath}
   \sigma_0\tilde{\phi}_0^2\xrightarrow[\epsilon\to0]{}\rho_r^4\vartheta^2 \,.
\end{displaymath}
Consequently, as $\epsilon\to0$ we have
\begin{equation}
\label{eq:140}
  {\rm   det}\,M \to C_0\sigma_0\rho_r^2\mu^2 [aC_0^2+bC_0+c]\,,
\end{equation}
where $a$, $b$, and $c$ are given by \eqref{eq:133}. It follows by
Lemma \ref{lem:inner-solution-1-quadratic} that  there exists
$z_m<C_0<z_M$, such that  ${\rm   det}\,M >0$ for every
$\gamma\in(0,\gamma_0)$, where $z_m$ and $z_M$ are given by
\eqref{eq:129}. Outside the boundary layer, for $d(x,\partial\Omega)>\delta$, the
existence of such $z_m<C_0<z_M$ follows in precisely the same manner,
as we obtain \eqref{eq:140} once again with $a$, $b$ and $c$ given by
the limit $\eta\to\infty$ in \eqref{eq:133}.

From the foregoing discussion we may conclude that for some $z_m<C_0<z_M$,
$C(\Omega,\sigma_0)>0$, and sufficiently small $\epsilon$,
\begin{equation}
\label{eq:113}
   | B[v,v]| \geq
   \frac{C}{\epsilon^2}(\|\eta\|_2^2+\|\omega\|_{1,2}^2+\|\varphi\|_2^2+\epsilon^2\|\nabla\varphi\|_2^2) + \|\nabla\eta\|_2^2\,.
\end{equation}
We can thus
conclude the existence of a unique $v\in\Vg$ such that 
\begin{displaymath}
  B[v,w]=\langle F,w\rangle \quad \forall w\in\Vg \,,
\end{displaymath}
where  $F=(f_1,\epsilon^{-2}f_2,C_0\epsilon^{-2}f_3)$. Since $F\in L^2(\Omega,\R^3)$ it
follows by standard elliptic estimates that $v\in\Hg$ (note that
$\|v\|_\Hg\leq C_\epsilon\|v\|_{2,2}$). 

Let $(\rho_1,\chi_1,\phi_1)\in\Hg$. We set
\begin{subequations}
\label{eq:114}
  \begin{align}
  & f_1 = -h_1  - \frac{1}{\epsilon^2}\Big[|\nabla\chi_1|^2\rho_0  +
      \nabla\chi_1\cdot(2\nabla\tilde{\chi}_0+ \nabla\chi_1)\rho_1+(3\rho_0+\rho_1)\rho_1^2\Big]\\
  & f_2 =\epsilon \Div H_2 - \Div\big(\rho_1^2\nabla\tilde{\chi}_0\big)  - \Div
  \big(\rho_1(2\rho_0+\rho_1)\nabla\chi_1\big) \\
& f_3 = \epsilon^2h_3 + \rho_1(2\rho_0+\rho_1)\phi_1 + \rho_1^2\tilde{\phi}_0        \,.
\end{align}
\end{subequations}
Substituting the above into \eqref{eq:110} we can define the operator
$\A:\Hg\to \Hg$ by
\begin{displaymath}
  \A(\rho_1,\chi_1,\phi_1)=(\eta,\omega,\varphi)\,.
\end{displaymath}
We look for a fixed point of $\A$. 
 
\medskip\noindent{\em Step 2: Let $v=(\rho_1,\chi_1,\phi_1)\in\Hg$.  We prove that for
sufficiently small $\epsilon$ and $r(\epsilon)=
\epsilon^{5/4}$ it holds that}
\begin{equation}
 \label{eq:115}
   v\in B(0,r) \Rightarrow \A(v)\in B(0,r) \,.
 \end{equation}

To this end, let $0<r\leq\epsilon$ and $v\in B(0,r)$. We begin by obtaining a bound
on $\|\eta\|_2$ and $\|A(v)\|_{1,2}$. By (\ref{eq:114}a), 
\eqref{eq:112}, and the fact that $\|\nabla\tilde{\chi}_0\|_\infty\leq2/\sqrt{3}$
for sufficiently small $\epsilon$, we
have that
\begin{displaymath}
  \|f_1\|_{2} \leq \|h_1\|_{2} + \frac{C}{\epsilon^2}\big[\|\nabla\chi_1\|_{4}^2+
  \|\rho_1\|_{4}\big(\|\nabla\chi_1\|_{4}+\|\nabla\chi_1\|_{8}^2\big)+ \|\rho_1\|_4^2 + \|\rho_1\|_6^3 \big] \,.
\end{displaymath}
By (\ref{eq:49}a), Sobolev embeddings, and the fact that $\|\chi_1\|_{2,2}\leq r$,
we then obtain 
\begin{equation}
\label{eq:116}
    \|f_1\|_{2} \leq C_s\Big(\epsilon^{s/2} +\frac{r^2}{\epsilon^2}\Big)\,,
\end{equation}
where
\begin{displaymath}
  s=\iota - 4(1-\iota) \,.
\end{displaymath}

Similarly, we obtain, using (\ref{eq:49}f), that
\begin{multline*}
\|f_2\|_2 \leq C\big[\epsilon\|\Div H_2\|_{2}+ \epsilon^{-1/2}\|\rho_1\|_\infty^2   + 
\|\rho_1\|_\infty(\|\nabla\rho_1\|_2 +\epsilon^{-1/2}\|\nabla\chi_1\|_{1,2}) + \\ (\|\rho_1\|_\infty^2+\|\rho_1\|_\infty)\|\Delta\chi_1\|_{2} +
(1+\|\rho_1\|_\infty)\|\nabla\rho_1\|_4\|\nabla\chi_1\|_4 \Big] \,.
\end{multline*}
In the above, to obtain that 
\begin{equation}
\label{eq:142}
   \|\rho_1\nabla\rho_0\cdot\nabla\chi_1\|_2 \leq C\epsilon^{-1/2}\|\rho_1\|_\infty\|\nabla\chi_1\|_{1,2}\,,
\end{equation}
we used the following estimate
\begin{multline}
\label{eq:143}
  \|\rho_1\nabla\rho_0\cdot\nabla\chi_1\|_2\leq
  C\|\rho_1\|_\infty\Big(\|\nabla\rho_o\|_\infty\|\nabla\chi_1\|_2 \\
  + \Big\|\Upsilon\frac{\partial(\rho_i-\rho_a)}{\partial s}\Big\|_\infty\Big\|\frac{\partial\chi_1}{\partial s}\Big\|_{L^2(\Omega\setminus\Omega_{\delta})} +
  \Big\|\frac{\partial(\Upsilon[\rho_i-\rho_a])}{\partial t}\frac{\partial\chi_1}{\partial t}\Big\|_2\Big) \,.
\end{multline}
The last term on the right-hand-side of the above inequality can be
bounded as follows: we first write
\begin{multline*}
 \Big\|\frac{\partial(\Upsilon[\rho_i-\rho_a])}{\partial t}\frac{\partial\chi_1}{\partial t}\Big\|_2^2 \leq\\
 C\int_0^{|\partial\Omega|} \Big\|\frac{\partial(\Upsilon[\rho_i-\rho_a])}{\partial
   t}(s,\cdot)\Big\|_{L^2(0,\delta)}^2\Big\|\frac{\partial\chi_1}{\partial t}(s,\cdot)\Big\|_{L^\infty(0,\delta)}^2 \,ds \leq\\
 C\sup_{s\in(0,|\partial\Omega|)}\Big\|\frac{\partial(\Upsilon[\rho_i-\rho_a])}{\partial
   t}(s,\cdot)\Big\|_{L^2(0,\delta)}^2 \int_0^{|\partial\Omega|} \Big\|\frac{\partial\chi_1}{\partial t}(s,\cdot)\Big\|_{L^\infty(0,\delta)}^2 \,ds\,.
\end{multline*}
By (\ref{eq:141}) and (\ref{eq:91}) we have
\begin{displaymath}
   \sup_{s\in(0,|\partial\Omega|)}\Big\|\frac{\partial(\Upsilon[\rho_i-\rho_a])}{\partial
   t}(s,\cdot)\Big\|_{L^2(0,\delta)}^2 \leq \frac{C}{\epsilon}\,.
\end{displaymath}
Furthermore, as $\frac{\partial\chi_1}{\partial t}(s,\cdot)\in H^1_0(0,\delta)$, we may write
\begin{displaymath}
  \Big\|\frac{\partial\chi_1}{\partial t}(s,\cdot)\Big\|_{L^\infty(0,\delta)}^2\leq  \Big\|\frac{\partial\chi_1}{\partial t}(s,\cdot)\Big\|_{L^2(0,\delta)}\Big\|\frac{\partial^2\chi_1}{\partial t^2}(s,\cdot)\Big\|_{L^2(0,\delta)}
\end{displaymath}
Consequently, we have that
\begin{displaymath}
   \Big\|\frac{\partial(\Upsilon[\rho_i-\rho_a])}{\partial t}\frac{\partial\chi_1}{\partial t}\Big\|_2^2 \leq
   \frac{C}{\epsilon}\|\nabla\chi_1\|_{1,2}^2 \,.
\end{displaymath}
Substituting the above into \eqref{eq:143} yields \eqref{eq:142}.

Note, that since $\|v\|_{\Hg}\leq r$, we have by (\ref{eq:109}) that
$\|\chi_1\|_{2,2}\leq r$. To bound $\|\nabla\rho_1\|_4$ we use a standard interpolation
inequality \cite[Theorem 5.8]{adfo03} from
which we conclude that
\begin{displaymath}
  \|\nabla\rho_1\|_4 \leq C\|\nabla\rho_1\|_2^{1/2}\|\nabla\rho_1\|_{1,2}^{1/2} \leq C \frac{r}{\epsilon^{1/2}}\,.
\end{displaymath}
Hence,  
with the aid (\ref{eq:49}d) and Sobolev embeddings we find that
\begin{equation}
  \label{eq:117}
\|f_2\|_2 \leq C_s\Big(\epsilon^{1+s/2} +\frac{r^2}{\epsilon^{1/2}}\Big)\,.
\end{equation}
For later reference we need also the following estimate
\begin{multline*}
  |\langle\omega,f_2\rangle| \leq C\Big\{\epsilon\big[\|H_2\|_{L^2(\Omega\setminus\Omega_\delta)} \|\nabla\omega\|_{2}+ \|\Div
  H_2\|_{L^2(\Omega_\delta)} \|\omega\|_{2} +\|H_2\|_{L^\infty (\partial\Omega_\delta)} \|\omega\|_{L^1(\partial\Omega_\delta)}\big] \\+ \big[ \|\rho_1\|_4^2 +
  (\|\rho_1\|_\infty+\|\rho_1\|_\infty^2)\|\nabla\chi_1\|_2\big] \|\nabla\omega\|_{2} \Big\} \,,
\end{multline*}
and since
\begin{displaymath}
   \|\omega\|_{L^1(\partial\Omega_\delta)}\leq C\|\omega\|_{1,2}\,,
\end{displaymath}
we may conclude that
\begin{equation}
\label{eq:118}
|\langle\omega,f_2\rangle|\leq C_s(\epsilon^{2+s/2} +r^2)\|\omega\|_{1,2}\,.
\end{equation}

Finally, by (\ref{eq:114}c)
\begin{displaymath}
\|f_3\|_2 \leq C\{ \epsilon^2\|h_3\|_{2} + (\|\rho_1\|_{4} +
\|\rho_1\|_{8}^2)\|\phi_1\|_{4}+ \|\rho_1\|_{4}^2\big\} \,,
\end{displaymath}
which leads, by (\ref{eq:49}e), to a similar estimate
\begin{equation}
\label{eq:119}
 \|f_3\|_2 \leq C_s\Big(\epsilon^{2+s/2} + r^2\Big)\,.
\end{equation}
Combining the above  with \eqref{eq:116} and \eqref{eq:118} yields
\begin{equation}
\label{eq:120}
|\langle\A(v),F\rangle| \leq C_s\Big(\epsilon^{s/2} + \frac{r^2}{\epsilon^2}\Big)\big(\|\eta\|_{2} + \|\omega\|_{1,2} +\|\varphi\|_2\big)  \,. 
\end{equation}

As $B(\A(v),\A(v))=\langle\A(v),F\rangle$ we obtain by \eqref{eq:113} that
\begin{equation}
\label{eq:121}
 \|\eta\|_{2} + \|\omega\|_{1,2} +\|\varphi\|_2 \leq C_s[\epsilon^{2+s/2} +r^2] \,,
\end{equation}
and that
\begin{equation}
\label{eq:122}
\|\nabla\eta\|_{2}\leq \frac{C_s}{\epsilon} [\epsilon^{2+s/2}+r^2] \,.
\end{equation}

To complete the proof of \eqref{eq:115} we first rewrite
 (\ref{eq:110}c) in 
the form
\begin{displaymath}
    \sigma_0\epsilon^2\Delta\varphi = \rho_0^2\varphi + 2\rho_0\tilde{\phi}_0\eta - f_3 
     \text{ in } \Omega \,,
\end{displaymath}
from which we easily conclude, with the aid of \eqref{eq:121}
and \eqref{eq:119}, that
\begin{equation}
\label{eq:131}
  \|\Delta\varphi\|_2 \leq C_s \Big[\frac{r^2}{\epsilon^2} + \epsilon^{s/2}\Big]\,.
\end{equation}
As
\begin{displaymath}
  \|\nabla\varphi\|_2^2 =-\langle\varphi,\Delta\varphi\rangle \leq \|\varphi\|_2\|\Delta\varphi\|_2 \leq C_s \Big[\frac{r^4}{\epsilon^2} + \epsilon^{2+s}\Big]\,,
\end{displaymath}
we can conclude that
\begin{equation}
  \label{eq:130}
\|\nabla\varphi\|_2\leq C_s \Big[\frac{r^2}{\epsilon} + \epsilon^{1+s/2}\Big]
\end{equation}
By standard elliptic estimates we may conclude from \eqref{eq:131}
that 
\begin{equation}
  \label{eq:123}
\|D^2\varphi\|_2 \leq C_s(\Omega,\sigma_0,j_r) \Big[\frac{r^2}{\epsilon^2} + \epsilon^{s/2}\Big]\,.
\end{equation}

We next rewrite (\ref{eq:110}b) in the form 
\begin{displaymath}
  -\Div(\rho_0^2\nabla\omega) = 2\Div \big(\eta\rho_0\nabla\tilde{\chi}_0\big) -\sigma_0\epsilon\Delta\varphi + f_2
\end{displaymath}
As in the proof of \eqref{eq:123} we then obtain, with the aid of
(\ref{eq:117}), (\ref{eq:131}), \eqref{eq:144}, and \eqref{eq:145} that
\begin{displaymath}
  \|\Div(\rho_0^2\nabla\omega)\|_{2} \leq C_s \Big[\frac{r^2}{\epsilon} + \epsilon^{1+s/2}\Big]\,.
\end{displaymath}
Thus, by \eqref{eq:121} and (\ref{eq:97}b)
\begin{displaymath}
  \|\Delta\omega\|_{2} \leq \|\Div(\rho_0^2\nabla\omega)\|_{2} +
  \|\nabla\rho_0\|_\infty\|\nabla\omega\|_{2} \leq C_s \Big[\frac{r^2}{\epsilon} + \epsilon^{1+s/2}\Big]\,,
\end{displaymath}
from which, by standard elliptic estimates we obtain that
\begin{equation}
\label{eq:124}
\|D^2\omega\|_2\leq  C_s \Big[\frac{r^2}{\epsilon} + \epsilon^{1+s/2}\Big]\,.
\end{equation}
 
To complete the proof we need yet to bound $\epsilon\|D^2\eta\|_2$
 and $\|\eta\|_\infty$. To this end we
 rewrite (\ref{eq:110}a) in the form
 \begin{displaymath}
   -\Delta\eta = -\frac{1}{\epsilon^2}(3\rho_0^2-1+|\nabla\tilde{\chi}_0|^2)\eta
     - \frac{2}{\epsilon^2}\rho_0\nabla\tilde{\chi}_0\cdot\nabla\omega  + f_1 \,.
 \end{displaymath}
 It easily follows that
 \begin{equation}
\label{eq:125}
  \frac{1}{\epsilon^2}\|(3\rho_0^2-1+|\nabla\tilde{\chi}_0|^2)\eta\|_2 \leq  \frac{C}{\epsilon^2} (r^2
  + \epsilon^2) \,. 
 \end{equation}
 Furthermore, as
 \begin{displaymath}
   \|\rho_0\nabla\tilde{\chi}_0\cdot\nabla\omega\|_2 \leq Cr^2 \,,
 \end{displaymath}
 we obtain with the aid of \eqref{eq:125} and \eqref{eq:116} that
\begin{equation}
\label{eq:126}
  \|D^2\eta\|_2 \leq   C_s \Big[\frac{r^2}{\epsilon^2} + \epsilon^{s/2}\Big]\,,
\end{equation}
We can now employ Agmon's inequality (cf. \cite[Lemma 13.2]{ag65})
 from which we learn, using the above in conjunction with \eqref{eq:121}, that 
\begin{equation}
\label{eq:127}
  \|\eta\|_\infty \leq   \frac{C_s}{\epsilon} [r^2 + \epsilon^{2+s/2}] \,.
\end{equation}
Combining \eqref{eq:127} with \eqref{eq:121}, \eqref{eq:122},
\eqref{eq:126}, \eqref{eq:123}, and \eqref{eq:124} yields
\begin{displaymath}
  \|\A(v)\|_\Hg \leq  C_s\Big[\frac{r^2}{\epsilon} + \epsilon^{1+s/2}\Big]\,.
\end{displaymath}
We may thus choose $r=\epsilon^{5/4}$ to obtain that 
\begin{equation}
\label{eq:132}
   \|\A(v)\|_\Hg\leq C\epsilon^{1+s/2} <r 
\end{equation}
for
sufficiently small $\epsilon$ and $\delta$ and $1/2<s<1$.

\medskip\noindent{\em Step 3:  We prove that there exists
 a $\gamma<1$ such that  for all $(v_1,v_2)\in B(0,r)^2$ we have}
 \begin{equation}
   \label{eq:128}
 \|\A(v_1)-\A(v_2)\|_\Hg \leq \gamma\|v_1-v_2\|_\Hg \,.
 \end{equation}
 It can be easily verified that
 \begin{align*}
   & \|f_1(v_1)-f_1(v_2)\|_2 \leq \frac{C}{\epsilon^2}r\|v_1-v_2\|_\Hg \,, \\
   & \|f_2(v_1)-f_2(v_2)\|_2 \leq Cr\|v_1-v_2\|_\Hg \,,\\
 & \|f_2(v_1)-f_2(v_2)\|_2 \leq  Cr\|v_1-v_2\|_\Hg \,.
 \end{align*}
 Let now $\A(v_1)=(\eta_1,\omega_1,\varphi_1)$ and $\A(v_2)=(\eta_2,\omega_2,\varphi_2)$. As
 \begin{displaymath}
   B(\A(v_1)-\A(v_2),\A(v_1)-\A(v_2))=\langle\A(v_1)-\A(v_2),F(v_1)-F(v_2)\rangle \,, 
 \end{displaymath}
 we obtain  by \eqref{eq:113} that
 \begin{displaymath}
   \|\eta_1-\eta_2\|_2 \leq Cr\|v_1-v_2\|_\Hg \,.
 \end{displaymath}
 In the same manner used to derive \eqref{eq:121} and \eqref{eq:122} we
 can now obtain that
 \begin{displaymath}
   \|\omega_1-\omega_2\|_{1,2} + \|\varphi_1-\varphi_2\|_{1,2} \leq  Cr\|v_1-v_2\|_\Hg \,,
 \end{displaymath}
 and that
 \begin{displaymath}
     \|\nabla(\eta_1-\eta_2)\|_2 \leq \frac{Cr}{\epsilon}\|v_1-v_2\|_\Hg \,.
 \end{displaymath}
 We then proceed in precisely the same manner as in the derivation of
 \eqref{eq:123} and \eqref{eq:124} to obtain that 
 \begin{displaymath}
   \epsilon\|\omega_1-\omega_2\|_{2,2} +  \|\varphi_1-\varphi_2\|_{2,2} \leq  Cr\|v_1-v_2\|_\Hg \,.
 \end{displaymath}
 Finally, using the same procedure as in the derivation of
 \eqref{eq:126} and \eqref{eq:127} gives
 \begin{displaymath}
       \|\eta_1-\eta_2\|_\infty +\epsilon^2\|\eta_1-\eta_2\|_{2,2} \leq \frac{Cr}{\epsilon}\|v_1-v_2\|_\Hg \,.
 \end{displaymath}
 Combining all of the above then yields
 \begin{displaymath}
   \|\A(v_1)-\A(v_2)\|_\Hg\leq \frac{Cr}{\epsilon}\|v_1-v_2\|_\Hg\,,
 \end{displaymath}
 and since $r=\delta\epsilon$, we obtain \eqref{eq:128} for a sufficiently small
 value of $\delta$. 

We have thus established the existence in the ball $B((\rho_0,\chi_0,\phi_0),\epsilon^{1+s/2})$ in $\Hg$ of a unique solution for
\eqref{eq:7}. The
proof of \eqref{eq:11} follows immediately from \eqref{eq:121}. The
proof of \eqref{eq:12} similarly follows from \eqref{eq:109} and
\eqref{eq:132}. 
\end{proof}

\section*{Acknowledgments}
This work of all four authors was partially supported by the BSF grant
no. 2010194. YA was partially supported by the NSF Grant
DMS-1613471. The work of LB was partially supported by NSF grant
DMS-1405769. 
\bibliography{potential1}
 \end{document}